\newtheorem{theorem}{Theorem}
\newtheorem{proposition}[theorem]{Proposition}
\newtheorem{corollary}[theorem]{Corollary}
\newtheorem{lemma}[theorem]{Lemma}
\theoremstyle{definition}
\newtheorem{remark}[theorem]{Remark}
\newtheorem{example}[theorem]{Example}
\newtheorem{definition}[theorem]{Definition}
\newcommand{\cP}{\mathcal{P}}
\newcommand{\sA}{\mathscr{A}}
\newcommand{\Om}{\Omega}
\newcommand{\la}{\lambda}
\newcommand{\R}{\mathbb{R}}      
\newcommand{\N}{\mathbb{N}}
\renewcommand{\P}{{\mathbb P}}
\newcommand{\Q}{{\mathbb Q}}
\newcommand{\E}{\mathbb{E}}
\newcommand{\FF}{\mathcal F}
\renewcommand{\d}{{\rm d}}
\newcommand{\DR}{\varrho}
\newcommand{\PD}{{\rm PD}}
\newcommand{\moc}{{\rm MoC}}
\newcommand{\Bb}{{\rm B}_{\rm b}}
\newcommand{\Lb}{{\rm L}_{\rm b}}
\renewcommand{\d}{{\rm d}}
\newcommand{\eins}{\mathds{1}}
\DeclareMathOperator{\VaR}{\rm VaR}
\DeclareMathOperator{\ES}{\rm ES}
\DeclareMathOperator{\Exp}{\rm Exp}
\newcommand{\VR}{\VaR_\DR^\alpha}
\newcommand\JS[1]{{#1}}
\title[Default Risk Measures]{An axiomatic approach to default risk and model uncertainty in rating systems}
\author{Max Nendel}
\address{Center for Mathematical Economics, Bielefeld University, Germany}
\email{max.nendel@uni-bielefeld.de}
\author{Jan Streicher}
\address{Center for Mathematical Economics, Bielefeld University, Germany and Landesbank Baden-W\"urttemberg, Stuttgart, Germany}
\email{jan.streicher@uni-bielefeld.de}
\thanks{The authors thank Felix-Benedikt Liebrich, Markus Klein, Ruodu Wang, as well as two anonymous referees for helpful comments and discussions related to this work.\ This work was funded by the Deutsche Forschungsgemeinschaft (DFG, German Research Foundation) -- SFB 1283/2 2021 -- 317210226.\ The second author gratefully acknowledges the support of the Landesbank Baden-W\"urttemberg related to this work.}
\date{\today}
\begin{document}
\maketitle

\begin{abstract}
In this paper, we deal with an axiomatic approach to default risk.\ We introduce the notion of a default risk measure, which generalizes the classical probability of default (PD), and allows to incorporate model risk in various forms.\ We discuss different properties and representations of default risk measures via monetary risk measures, families of related tail risk measures, and Choquet capacities.\ In a second step, we turn our focus on default risk measures, which are given as worst-case PDs and distorted PDs.\ The latter are frequently used in order to take into account model risk for the computation of capital requirements through risk-weighted assets (RWAs), as demanded by the Capital Requirement Regulation (CRR). 
In this context, we discuss the impact of different default risk measures and margins of conservatism on the amount of risk-weighted assets.

\smallskip
\noindent \textit{Key words:} default risk measure, model uncertainty, probability of default, Choquet capacity, margin of conservatism, monetary risk measure, value at risk, risk-weighted assets

\smallskip
\noindent\textit{JEL Classification:} G21; G28; G32

\smallskip
\noindent\textit{AMS 2020 Subject Classification:} 91G40; 91G70; 28A12
\end{abstract}


\section{Introduction}
\JS{Financial institutions and corporate firms have a variety of different risk types, such as credit risk, market price risk, operational risk, or liquidity risk.\ These are ubiquitous in business operations and affect different business areas.\ In the context of corporate risk management, credit risk is of fundamental importance for a correct assessment of existing and potential future risks and an adequate management of business operations, in general.\ For most financial institutions, such as banks, credit risk is even the most important type of risk and therefore also in the focus of supervisory regulations such as the Capital Requirements Regulation (CRR) \cite{CRR}, which is part of Basel III and also known as Regulation (EU) 575/2013.}

\JS{From the institutes' point of view, credit risk generally involves the risk that borrowers cannot repay loans granted to them in accordance with the predetermined contracts. In the context of counterparty credit risk, ratings serve the purpose of dividing borrowers or business partners, in general, into risk classes, which are specified via a probability of default over a one-year horizon. Since default risk represents a complex issue, there are different rating systems depending on the type of customers (individuals, corporate customers, banks, insurance companies, etc.).\  For banks, ratings represent the basis of credit risk controlling and reporting, and lead to more differentiated pricing, greater competitiveness, and an improvement in credit decisions.} For a general introduction to credit risk modeling, we refer to Bluhm et al.\ \cite{bluhm2016introduction} and Lando\ \cite{lando2009credit}. \JS{We also refer to Guo et al.\ \cite{wang2021rating} for an axiomatic study of credit rating criteria.}

 \JS{The focus of a rating is on the probability of default (PD), which refers to the creditworthiness of the borrower, and \textit{not} to credit-specific terms, such as exposure at default (EaD) or loss given default (LGD). Nevertheless, all three terms play an important role in the context of default risk, are included in the calculation of expected losses (EL) and risk-weighted assets (RWAs), and are therefore part of supervisory requirements for risk-differentiated capital backing. Since rating systems are mathematical statistical models that transform a borrower's default-relevant characteristics into a statement of creditworthiness, they are subject to model risks and model uncertainties, which can lead to major discrepancies in credit risk management if neglected.\ Additionally, the European Banking Authority's (EBA) guidelines on PD and LGD estimation (EBA-GL-2017-16) \cite{EBA_GL} serve to reduce fluctuations in risk parameters, and focus on modeling techniques used in the estimation of risk parameters.}\ In particular, the PD estimation in low-default portfolios is a subtle issue, cf.\ Pluto and Tasche \cite{pluto2011estimating} and Tasche \cite{tasche2013bayesian}.
 
 In the aftermath of the subprime mortgage crisis, the topic of model uncertainty or Knightian uncertainty has become increasingly significant for financial institutions and found its way into regulatory requirements in various forms.\ \JS{As a consequence, this classical and already very prominent topic in economic theory} has received \JS{even more} attention in the literature on theoretical economics, mathematical finance, and actuarial sciences. \JS{Model uncertainty appears in the economic literature, for example,} in the context of preference relations, cf.\ Gilboa and Schmeidler \cite{MR1000102} and Maccheroni et al.\ \cite{MR2268407}, general equilibrium theory, cf.\ Beissner and Riedel \cite{MR3914797}, insurance pricing, cf.\ Castagnoli et al.\ \cite{MR2070175}, Nendel et al.\ \cite{MR4269594}, and Wang et al.\ \cite{MR1604936}, as well as hedging and no-arbitrage conditions, see, for instance, Bouchard and Nutz \cite{MR3313756} and Burzoni et al.\ \cite{MR4325181}. However, to the best of our knowledge, a detailed study of model uncertainty in credit risk management is not present in the literature.
 
 This paper therefore aims to provide a decision-theoretic foundation for the treatment of default risk and model uncertainty in rating systems.\ \JS{While the work of Guo et al.\ \cite{wang2021rating} focuses on an axiomatic study of credit rating criteria, including, among others, generalized PD criteria, our axiomatic approach introduces the notion of a default risk measure, which aims to provide a more general perspective on PDs, and allows to include, for example, model uncertainty in the form of worst-case PDs and distorted PDs, warning signals, and default risk arising from regulatory risk measures. In particular, we explore the use of generalized versions of PDs in the context of credit risk, but do not aim to identify relevant criteria for credit ratings.} In contrast to monetary risk measures, cf.\ Artzner et al.\ \cite{MR1850791}, Frittelli and Rosazza Gianin \cite{MR2766847}, and F\"ollmer and Schied \cite{MR3859905}, and nonlinear expectations, cf.\ Coquet et al. \cite{MR1906435} and Peng \cite{MR3970247}, default risk measures do not behave linearly along constants but only take the values zero (no default) or one (default) for constant functions. 
 
 Throughout, we consider a set $C$ of customers, i.e., a set of bounded measurable functions on a given measurable space $(\Omega,\mathcal F)$ containing all constant functions.\ A default risk measure $\DR$ is a monotone functional that assigns to each customer $X\in C$ a default risk $\DR(X)\in [0,1]$.\ Here, positive values of $X$ represent a negative total cash flow or, loosely speaking, a default.\ In a first step, we show that every default risk measure can be extended from the set $C$ to the space $\Bb$ of all bounded measurable functions, cf.\ Theorem \ref{thm.extension}.\ The extension procedure is constructive, and shows how default risk can be assigned consistently to new customers based on a financial institution's stock of existing clients.
 
 In a second step, we consider tail risk measures related to default risk measures.\ As noted by Liu and Wang \cite{MR4312589}, the consideration of tail risk, i.e., the risk beyond a given threshold, is crucial in today's financial regulation.\ We also refer to Bignozzi et al. \cite{riskmeasures} for a generalization of the value at risk that depends on the size of potential losses in the form of quantile-based risk measures, to Fadina et al.\ \cite{fadina2021one} for an axiomatic study of quantiles, and to Burzoni et al.\ \cite{BURZONI2022106297} for a study of adjusted expected shortfall. In Section \ref{sec.acceptance}, we show that each default risk measure induces its own notion of a value at risk, and establish a ono-to-one relation between default risk measures and so-called generalized quantile functions.\ For particular choices of default risk measures, e.g., distorted PDs and worst-case PDs, we provide explicit representations of the related value at risk. \JS{We point out that our notion of a generalized quantile function follows a different philosophy than the notion of a tail risk measure introduced by Liu and Wang \cite{MR4312589}.}

A key property of the PD is that it is specified only by the states of the world where a negative total cash flow is realized, independent of the amount of capital given liquidity or illiquidity.\ In mathematical terms, this means that the PD of $X$ is the same as the PD of $\eins_{\{X>0\}}$. In Section \ref{sec.properties}, we characterize default risk measures that have this property, and connect them to Choquet capacities, cf.\ Dellacherie and Meyer \cite{MR521810}. In this context, the notions of \JS{default} scaling invariance, liquidity invariance, and illiquidity invariance play a fundamental role. Using continuity properties of Choquet integrals, we derive sufficient and necessary conditions for default risk measures to admit a representation via probability measures or, in other words, as worst-case PDs.

For the calculation of RWAs, the concept of a margin of conservatism (MoC) is used\JS{, in practice,} to quantify the amount of model uncertainty.\ The regulatory need to consider model uncertainty regarding default risks by calculating a MoC that reflects the expected range of estimation errors can be found in Article 179 (f) or, PD-specific, in Article 180 (e) of the CRR \cite{CRR}, among others.\ In Section \ref{sec:distorted}, we characterize default risk measures that are given in terms of a MoC or, equivalently, as distorted PDs. The characterization generalizes the fact that law-invariant capacities on an atomless probability space can be represented as distorted probabilities, cf.\ Wang et al.\ \cite{MR1604936}, where this result is established in the context of insurance premia that are given as Choquet integrals and Amarante and Liebrich \cite{amarante2023distortion} for a detailed study of distortion risk measures, i.e., law-invariant and comonotonically additive risk measures.

Moreover, we establish a connection between distorted PDs and worst-case PDs or, equivalently, the margin of conservatism and a suitable set of probability measures, based on the Kusuoka representation of law-invariant risk measures, cf.\ Kusuoka \cite{MR1886557}, and the well-known Fr\'echet-Hoeffding bounds for joint distributions, cf.\ Burgert and R\"uschendorf \cite{MR2323193}, which also play a fundamental role for the collapse to the mean of law-invariant risk functionals, cf.\ Bellini et al.\ \cite{MR4236430} and Liebrich and Munari \cite{MR4446849}. 

\JS{In Section \ref{sec:application}, we use the results on distorted PDs in a case study on capital requirements as demanded by current regulations. There, we discuss the impact of model uncertainty in rating systems on financial institutions' RWAs. Since, from a regulatory perspective, model uncertainty only has to be considered for unexpected losses and not for expected losses (EL), ironically, a high degree of model uncertainty can actually reduce the amount of capital requirement for badly rated customers, since it transforms unexpected losses into expected ones. We refer to Example \ref{last ex} for the details.}

The rest of the paper is organized as follows.\ In Section \ref{sec: Definition}, we define the notion of a default risk measure, illustrate the definition in several examples, and state our extension result for default risk measures (Theorem \ref{thm.extension}).\ The link between default risk measures and generalized quantile functions is discussed in Section \ref{sec.acceptance}.\ Section \ref{sec.properties} is devoted to default risk measures that are given only in terms of default scenarios, see Theorem \ref{thm.main}.\ Section \ref{sc:Choquet} contains several results on robust representations as worst-case PDs. In Section \ref{sec:distorted}, we focus on law-invariant risk measures and distorted PDs. There, we connect distortion functions with certain properties to sets of absolutely continuous probability measures based on the value at risk and expected shortfall of probability densities. In Section \ref{sec:application}, we discuss capital requirements for rating systems, and illustrate the impact of different default risk measures on the amount of financial institutions' risk weighted assets. In Appendix \ref{app.distorted}, we provide a short proof for a characterization of exact capacities and distorted probabilities, cf.\ Aouani and Chateauneuf \cite{MR2442200} and Kadane and Wassermann \cite{MR1401848}. \JS{The proofs of Section \ref{sec: Definition} are contained in Appendix \ref{proofs of sc.2}. The proofs of Section \ref{sec.acceptance} can be found in Appendix \ref{proofs of sc.3}. The proofs of Section \ref{sec.properties} are collected in Appendix \ref{proofs of sc.4}. The proofs of Section \ref{sc:Choquet} are given in Appendix \ref{proofs of sc.5} and the proofs of Section \ref{sec:distorted} in Appendix \ref{proofs of sc.6}.}

\section{Default Risk Measures: Definition and Examples} \label{sec: Definition}

In this section, we introduce the concept of a default risk measure, which is strongly motivated by the probability of default (PD) as a prime example.\ Like monetary risk measures, default risk measures are monotone functionals defined on suitable sets of measurable functions, cf.\ \cite{MR3859905}.\ However, they exhibit a completely different behaviour along constants.

Throughout, let $(\Omega, \FF)$ be a measurable space and $\Bb=\Bb(\Omega,\FF)$ denote the space of all bounded measurable functions $\Omega\to \R$.\ We consider a set $C\subset \Bb$, containing the set of all constant functions.\ A function $X\in C$ can be interpreted as a customer of a financial institution with $-X(\omega)$ being the sum of all financial flows (earnings, spendings, and maturities combined) at the end of the respective observation period if a scenario $\omega\in\Omega$ is realized. Thus, positive values of $X$ resemble a negative sum of all financial flows, which we will, loosely speaking, refer to as a \textit{default}.\ Choosing this, in comparison to the literature on monetary risk measures, inverted sign convention leads to an easier exposition since it avoids confusion arising from repeated sign changes on several occasions.\ As in the theory of monetary risk measures, we do not differentiate between a real constant $m\in \R$ and the constant function $X\colon \Omega\to \R$ with $X(\omega)=m$ for all $\omega\in \Omega$, and write $X=m$, thinking of it as \textit{cash}. For $X,Y\in \Bb$, we write $X\leq Y$ if $X(\omega)\leq Y(\omega)$ for all $\omega\in \Omega$. Moreover, we define 
$$
\inf X:=\inf_{\omega\in \Omega}X(\omega)\quad \text{and}\quad\sup X:=\sup_{\omega\in \Omega}X(\omega)\quad\text{for all }X\in \Bb.
$$
Additionally, for $X\in \Bb$, we consider the standard decomposition $X=X^+-X^-$ with
$$X^+:=X\eins_{\{X>0\}}\quad \text{and}\quad X^-:=-X\eins_{\{X< 0\}}.$$
\JS{For any two real numbers $x,y\in \R$, we use the notation $x\vee y:=\max\{x,y\}$ and $x\wedge y:=\min\{x,y\}$.\ In a similar fashion, we write $X\vee Y$ and $X\wedge Y$ for the pointwise maximum and minimum of  $X,Y\in \Bb$, respectively.} 
Throughout, we use the following slightly modified notion of a (monetary) risk measure, and refer to \cite{MR3859905} for a detailed discussion on this topic.

\begin{definition}\label{def.riskmeasure}
 We say that a map $R\colon \Bb\to \R$ is a \textit{(monetary) risk measure} if
 \begin{enumerate}
    \item[(i)] $R(X)\leq R(Y)$ for all $X,Y\in \Bb$ with $X\leq Y$,
    \item[(ii)] $R(0)= 0$ and $R(X+m)=R(X)+m$ for all $X\in \Bb$ and $m\in \R$.
\end{enumerate}
\end{definition}

We now introduce the central object of our study. 

\begin{definition}\label{Def DRM}
A map $\DR \colon C \to [0,1]$ is called a \textit{default risk measure} if
\begin{enumerate}
    \item[(i)] $\DR(X)\leq \DR(Y)$ for all $X,Y\in C$ with $X\leq Y$,
    \item [(ii)] $\DR(0)= 0$ and $\DR(m)=1$ for all $m\in \R$ with $m>0$.
\end{enumerate}
\end{definition}

Thinking of PDs, the respective properties seem to be very canonical.\ For instance, comparing two customers it is obvious that the one with the higher total cash flow in all scenarios exhibits a lower risk of default (Property (i)). Moreover, for all $X\in C$ with $X\leq0$,
\[
0\leq \DR(X)\leq \DR(0)=0,
\]
i.e., if all obligations can be payed in any scenario the customer's default risk will be zero, and a constant negative total cash flow ($m>0$) leads at least to an unlikely repayment, and hence to a sure default (Property (ii)).\ Although Property (i) in the definition of a default risk measure is analogous to the monotonicity of monetary risk measures, Property (ii) is substantially different from the standard \textit{cash additivity} or \textit{translation invariance}.\ To that end, consider a default risk measure $\DR\colon \Bb\to [0,1]$ and observe that
\[
\DR(X-\sup X)=0\quad \text{for all }X\in \Bb.
\]
By definition, neither convexity nor positive homogeneity (of degree 1) are meaningful properties for default risk measures, since
$$\DR (\lambda)=1 > \lambda=\lambda \DR (1)\quad\text{for all }\lambda\in (0,1).$$
\JS{We thus observe that the properties of monetary risk measures differ substantially from those of default risk measures despite the similarity of their very general definitions. Nevertheless, there is the possibility to construct default risk measures from monetary risk measures as the following example illustrates.}
\begin{example}[Default risk measure defined by a monetary risk measure] 
Given a monetary risk measure $R\colon \Bb\to \R$ as in Definition \ref{def.riskmeasure}, we are able to construct a default risk measure via
$$\DR(X):= R\big(\eins_{\{X>0\}}\big)\quad\text{for all }X\in \Bb.$$ 
For $X,Y\in \Bb$ with $X\leq Y$, we have $\eins_{\{X>0\}}\leq \eins_{\{Y>0\}}$, so that 
$$0\leq R\big(\eins_{\{X>0\}}\big)\leq  R\big(\eins_{\{Y>0\}}\big)\leq 1.$$ 
Moreover,
$\DR(0)= R\big(\eins_{\emptyset}\big)=R(0)=0$ 
and $\DR(m)=R(1)=1$ for all $m\in \R$ with $m>0$.
\end{example}
\JS{We continue with several examples of default risk measures, and begin with the most prominent one}. 

\begin{example}[Probability of default]\label{ex:PD}  
We fix a reference probability measure $\P$ on $\FF$, and consider the \textit{probability of default} (PD), given by $$\PD_\P(X):=\P(X>0)\quad\text{for all }X\in \Bb.$$If the probability measure $\P$ is discrete, this default risk measure could be interpreted as a mapping to common rating classes.
Clearly, $\PD_\P$ satisfies all properties of a default risk measure. For $X,Y\in \Bb$ with $X \leq Y$, $\{X>0\}\subset\{Y>0\}$, so that $$0\leq \P(X>0)\leq \P(Y>0)\leq 1.$$ Furthermore,  $\PD_\P(0)=\P(\emptyset)=0$ and $\PD_\P(m)=\P(\Omega)=1$ for all $m\in \R$ with $m>0$.
\end{example}
\JS{Building on this example, we can also consider the case, where model uncertainty is taken into account via a distortion function.}
\begin{example}[Distorted PD]\label{Distorted PD} 
Again, we fix a reference probability measure $\P$ on $\FF$. Due to a lack of data, bad data quality, or changing economic environments, the consideration of uncertainties in form of a \textit{margin of conservatism} (MoC) becomes more and more important for financial institutions.\ Since such model uncertainties are part of any model, including rating models, it is possible that the reference probability measure $\P$ is not the `precise' probability measure that represents the default risk of costumers over a one-year time horizon.\
We therefore consider a nondecreasing distortion function $T\colon [0,1]\to [0,1]$ with $T(0)=0$ and $T(1)=1$. We define $$\DR(X):=T\big(\P(X>0)\big)=T\big(\PD_\P(X)\big)\quad\text{for all }X\in \Bb.$$
In this case, the distortion function $T$ can be regarded as a benchmark for model uncertainty, and the margin of conservatism is given by
$$\moc(p):=\frac{T(p)}{p}-1\quad\text{for all }p\in (0,1].$$  
Clearly, the two properties of a default risk measure carry over from the classical PD, cf.\ Example \ref{ex:PD}, to the distorted PD for any probability measure $\P$ and any nondecreasing distortion function $T\colon [0,1]\to [0,1]$ with $T(0)=0$ and $T(1)=1$.
\end{example}
\JS{Apart from distorting a reference probability measure as in the previous example, there is also the possibility of incorporating model uncertainty via worst-case considerations among sets of probability measures.}
\begin{example} [Worst-case PD]\label{WC PD}
 In order to properly account for uncertainties w.r.t.\ model specifications, it is often necessary to consider various models at the same time. This becomes particularly relevant, if the models have different sets of measure zero, since then one model neglects certain events that occur with positive probability under a different model.\ We therefore consider the following generalization of Example \ref{ex:PD}. Let $\cP$ be a nonempty set of probability measures and $$\DR(X):=\sup_{\Q\in\cP}\Q(X>0)=\sup_{\Q\in\cP}\PD_\Q(X)\quad\text{for all }X\in \Bb.$$ As before, the properties (i) and (ii) have been shown for classical PDs in Example \ref{ex:PD}, and remain valid when taking the supremum over PDs.\ Following \cite{MR2442200}, we will see that, in many cases, distorted PDs allow for a representation as worst-case PDs and vice versa, see Section \ref{sec:distorted}.
\end{example}

Up to now, all examples for default risk measures have been of the form
\begin{equation}\label{eq.repdrm}
\DR(X)=\DR\big(\eins_{\{X>0\}}\big)\quad \text{for all }X\in \Bb,
\end{equation}
i.e., $\DR(X)$ only depends on the set where $X\in \Bb$ is larger than zero, completely independent of its values.\ Thinking of PDs from rating systems, this is a very desirable property, since it implies that the financial institution is only interested in the customers' ability to pay their dues.\
In Section \ref{sec.properties}, we derive sufficient and necessary conditions for default risk measures in order to satisfy \eqref{eq.repdrm}.\ The following two examples show that, however, not every default risk measure needs to allow for such a representation \JS{as it is also possible to define default risk measures that depend on specific values of $X$, for instance, using a warning signal that leads to a more conservative risk assessment.}
\begin{example}[Warning signal]\label{WS}
Consider two default risk measures $\rho_0$ and $\rho_c$ with $\rho_0 \leq  \rho_c $, i.e., $\rho_0$ is less conservative then $\rho_c$. The idea is that $\rho_c$ acts as a warning signal if scenarios are possible where the loss exceeds a given maximum level. For $X\in \Bb$ and $\gamma>0$, we define 
 $$ \DR_{\gamma} (X):= \begin{cases}
  \DR_0(X), & \DR_c\big(X-\tfrac1\gamma\big)=0, \\
  \DR_c(X), &  \DR_c\big(X-\tfrac1\gamma\big)>0.
\end{cases} $$
Thus, for $\gamma>0$, we change from $ \DR_0$ to the more conservative default risk measure $\DR_c$ when the potential loss exceeds the level $\frac1\gamma$ under $\DR_c$. We observe that
$$ \lim_{\gamma\to 0}\DR_{\gamma}(X)=\DR_0(X)\quad\text{for all }X\in \Bb.$$ 
In fact, $\DR_\gamma(X)=0=\DR_0(X)$ for all $X\in \Bb$ with $X\leq 0$, and $\DR_c(X-\sup X)=0$ for all $X\in \Bb$ with $\sup X>0$.\ A concrete choice for $\DR_c$ and $\DR_0$ are, for example, $\DR_0=\PD_\P$ and  $\DR_c(x)=\sup_{\Q\in\cP}\PD_\Q(X)$ (worst-case PD) for all $X\in \Bb$, where $\cP$ is a nonempty set of probability measures containing $\P$. For example, it is conceivable that customers, for whom the possible loss exceeds the limit $\frac1\gamma$, might exhibit additional risk factors that increase their probability of default.
\end{example}
\JS{At first glance, the following example is reminiscent of Example \ref{WC PD}, in which the worst-case PD was considered. As an additional criterion, the risk of $X\in \Bb$, described by a monetary risk measure, determines how many models are considered in the calculation of the supremum.}
\begin{example}[Increasing conservatism]\label{ex.increasing}
Let $\mathcal P$ be a set of probability measures on $\FF$, $R\colon \Bb\to \R$ be a monetary risk measure, and $\alpha\colon \mathcal P\to [0,\infty)$ with $\inf_{\Q\in \mathcal P}\alpha(\Q)=0$. For $X\in \Bb$, let $\DR(X):=0$ if $R(X)\leq 0$ and
\[
\DR(X):=\sup\big\{\PD_\Q(X)\,\big|\, \Q\in \mathcal P,\, \alpha(\Q)\leq R(X)\big\} \quad\text{if }R(X)>0,
\]
i.e., the larger the risk associated to $X\in \Bb$, the more models are taken into account, when assessing the default risk.\ In this case, $\alpha(\Q)$ measures the degree of confidence that the model $\Q\in \mathcal P$ is the 'correct' model, where $\alpha(\Q)=0$ corresponds to maximal confidence.\ The (monetary) risk measure $R$ can be interpreted as the outcome of some internal risk assessment.
Since, by assumption, $\inf_{\Q\in \mathcal P}\alpha (\Q)=0$, it follows that $$\big\{\PD_\Q(X)\,\big|\, \Q\in \mathcal P,\, \alpha(\Q)\leq R(X)\big\}\neq\emptyset\quad\text{for all }X\in \Bb\text{ with }R(X)>0.$$ In particular, $\DR(X)\in [0,1]$ for all $X\in \Bb$ and $\DR(m)=1$ for all $m\in \R$ with $m>0$. For $X,Y\in \Bb$ with $X\leq Y$, it follows that $ R(X)\leq R(Y)$ and $\PD_\Q(X)\leq \PD_\Q(Y)$ for all $\Q\in \mathcal P$, so that $\DR(X)\leq \DR(Y)$.
\end{example}

\JS{After some examples have been discussed, we now turn our focus on extensions of a given default risk measure $\DR\colon C\to [0,1]$ from the set of existing customers $C$ to the space $\Bb$ of all bounded measurable functions. We start with following definition.}

\begin{definition}\
Let $\DR\colon C\to [0,1]$ be a default risk measure and $F\colon \Bb\to \R$. 
\begin{enumerate}
\item[a)] We say that $F$ is \textit{monotone} if $F(X)\leq F(Y)$ for all $X,Y\in \Bb$ with $X\leq Y$.
\item[b)]
We say that $F$ is \textit{compatible} with $\DR$ if $F(0)=0$ and, for $X,Y\in C$, $F(X-Y)\leq 0$ implies that $\DR(X)\leq \DR(Y)$.
\end{enumerate}
\end{definition}

\begin{theorem}\label{thm.extension}
Let $\DR\colon C\to [0,1]$ be a default risk measure and $F\colon \Bb\to \R$ be monotone and compatible with $\DR$. Then, $\DR_F\colon \Bb\to \R$, given by 
\[
\DR_F(X):=\inf\big\{\DR(X_0)\, \big|\, X_0\in C,\, F(X-X_0)\leq 0\big\}\quad\text{for all }X\in \Bb,
\]
defines a default risk measure on $\Bb$ with $\DR_F(X)=\DR(X)$ for all $X\in C$. 
\end{theorem}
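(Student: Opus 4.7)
The plan is to verify the four defining properties of a default risk measure for $\DR_F$, together with the extension identity. All estimates will flow from two observations: the monotonicity of $F$ plus the identity $F(0)=0$ allow one to exhibit many admissible $X_0$'s for a given $X\in\Bb$, whereas compatibility of $F$ with $\DR$ turns an inequality $F(X-X_0)\le 0$ into the inequality $\DR(X)\le\DR(X_0)$ needed for the lower bounds. I do not expect any genuinely hard step; the main subtlety is keeping track of when to use monotonicity (for upper bounds on $\DR_F$) versus compatibility (for lower bounds).

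First I would check that the defining set is nonempty, so that $\DR_F$ takes values in $[0,1]$. Given $X\in\Bb$, pick a constant $m\ge\max(\sup X,0)$, which lies in $C$. Then $X-m\le 0$, hence by monotonicity of $F$ we have $F(X-m)\le F(0)=0$, so $m$ is admissible. As $\DR(m)\in[0,1]$ and $\DR\ge 0$, this yields $\DR_F(X)\in[0,1]$. Monotonicity of $\DR_F$ is equally easy: if $X\le Y$, then for any admissible $X_0$ for $Y$ we get $X-X_0\le Y-X_0$, so monotonicity of $F$ gives $F(X-X_0)\le F(Y-X_0)\le 0$, meaning $X_0$ is admissible for $X$ as well, and taking the infimum yields $\DR_F(X)\le\DR_F(Y)$.

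Next I would check the normalization axioms. For $\DR_F(0)$, the choice $X_0=0$ is admissible since $F(0)=0$, so $\DR_F(0)\le\DR(0)=0$, and combined with $\DR_F\ge 0$ this gives $\DR_F(0)=0$. For $m\in\R$ with $m>0$, take any admissible $X_0\in C$ with $F(m-X_0)\le 0$. Since $m\in C$ as well, compatibility of $F$ with $\DR$ yields $\DR(m)\le\DR(X_0)$, i.e.\ $\DR(X_0)\ge 1$, and therefore $\DR_F(m)\ge 1$; the reverse inequality is automatic, so $\DR_F(m)=1$.

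Finally, the extension property $\DR_F(X)=\DR(X)$ for $X\in C$ splits cleanly into the two directions used above. The choice $X_0=X$ is admissible (since $F(0)=0$), giving $\DR_F(X)\le\DR(X)$. Conversely, for any admissible $X_0\in C$ with $F(X-X_0)\le 0$, compatibility applied to $X,X_0\in C$ yields $\DR(X)\le\DR(X_0)$, and passing to the infimum over such $X_0$ yields $\DR(X)\le\DR_F(X)$. This closes the argument.
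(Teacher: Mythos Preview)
Your proof is correct and follows essentially the same approach as the paper: both use $F(0)=0$ for the upper bounds (admissibility of $X_0=X$) and compatibility for the lower bounds, and both deduce monotonicity of $\DR_F$ from monotonicity of $F$. The only cosmetic differences are that you explicitly verify nonemptiness of the admissible set and check the normalization axioms $\DR_F(0)=0$, $\DR_F(m)=1$ directly, whereas the paper obtains these immediately from the extension identity and the fact that all constants lie in $C$.
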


The previous theorem shows that any default risk measure $\DR$ on $C$ can be extended to a default risk measure on $\Bb.$ \JS{Assume that a financial institution has a set $C$ of existing customers and a default risk measure $\DR$, which represents a rating system, assigning to each customer a rating class in the form of a probability of default.\ If the financial institution now aims to allocate new customers, which are characterized by different or additional risk factors, within the existing rating classes, this is possible with the extension procedure described in Theorem \ref{thm.extension}.\ Due to the construction of $\DR_F$, no new rating classes are introduced and the monotone function $F\colon \Bb\to \R$ might resemble an internal risk measurement procedure. Here, the compatibility condition, which may not seem very intuitive at first, entails a comparison in the risk assessment between new and existing customers.}

A particularly interesting choice for a monotone map $F\colon \Bb\to \R$, as in Theorem \ref{thm.extension}, is given by the choice $F(X):=\sup X$ for all $X\in \Bb$.\ This leads to the default risk measure
$$
\DR_{\sup}(X):=\inf\big\{\DR(X_0)\, \big|\, X_0\in C,\, X\leq X_0\big\}\quad\text{for all }X\in \Bb,
$$
which is akin to the idea of superhedging.\ Note that this choice of $F$ is compatible with \textit{every} default risk measure $\DR\colon C\to [0,1]$. \JS{Furthermore, this choice of an extension leads to the most conservative default risk measure which is consistent with $\DR$ as the following corollary indicates.}

\begin{corollary} \label{cor. est}
Let $\DR\colon C\to [0,1]$ be a default risk measure.\ Then, for every default risk measure $\overline \DR\colon \Bb\to [0,1]$ with $\overline \DR(X_0)=\DR(X_0)$ for all $X_0\in C$, it holds
\[
\overline \DR(X)\leq \DR_{\sup}(X)\quad\text{for all }X\in \Bb.
\]
\end{corollary}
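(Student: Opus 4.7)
The plan is to exploit monotonicity of $\overline{\DR}$ together with the fact that it agrees with $\DR$ on $C$, reducing the inequality to a straightforward passage to the infimum.

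First I would fix an arbitrary $X\in \Bb$ and recall that $\DR_{\sup}(X)$ is defined as an infimum over the (nonempty) set $\{X_0\in C\,|\, X\leq X_0\}$; nonemptiness is immediate since $X$ is bounded, so any constant $m\geq \sup X$ lies in $C$ and dominates $X$. Then, for any such $X_0\in C$ with $X\leq X_0$, monotonicity of $\overline{\DR}$ (Property (i) of Definition \ref{Def DRM}) yields
\[
\overline{\DR}(X)\leq \overline{\DR}(X_0),
\]
and by assumption $\overline{\DR}(X_0)=\DR(X_0)$. Hence $\overline{\DR}(X)\leq \DR(X_0)$ for every admissible $X_0$, and taking the infimum over these gives $\overline{\DR}(X)\leq \DR_{\sup}(X)$.

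There is no real obstacle here; the only subtlety worth highlighting is that the result needs no compatibility hypothesis because the choice $F(\cdot)=\sup(\cdot)$ in Theorem \ref{thm.extension} is compatible with every default risk measure, so $\DR_{\sup}$ is automatically the pointwise largest extension of $\DR$ to $\Bb$. Thus the corollary really just records this maximality.
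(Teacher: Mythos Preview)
Your proof is correct and follows exactly the same route as the paper: monotonicity of $\overline{\DR}$ together with $\overline{\DR}|_C=\DR$ gives $\overline{\DR}(X)\leq \DR(X_0)$ for every $X_0\in C$ with $X\leq X_0$, and taking the infimum yields the claim. The only addition you make is the explicit remark that the index set is nonempty because $\sup X\in C$, which the paper leaves implicit.
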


Another interesting example is given by the case, where $C$ consists of only constant functions and $F$ is a monetary risk measure, e.g., the expected value with respect to a probability measure $\P$ on $\FF$. This case is discussed in the following example.

\begin{example}[Binary default risk measure]\label{ex.mean_value} 
Let $R\colon \Bb\to \R$ be a monetary risk measure. For $X\in \Bb$, let
\[
\DR(X):=\inf\big\{\eins_{(0,\infty)}(m)\, \big|\, m\in \R,\, R(X-m)\leq 0\big\}.
\]
Then, for all $X\in \Bb$,
 $$ \DR(X)= \begin{cases}
  0, & \text{if } R(X) \leq 0, \\
  1, & \text{otherwise}.
\end{cases}$$
Since $\DR$ only takes two values, it is an almost trivial example of a default risk measure, which, however, has some surprising properties in terms of acceptance sets and its related value at risk, cf. Section \ref{sec.acceptance}, below.
\end{example}

\section{Additional properties and value at risk}\label{sec.acceptance}

Recall that a monetary risk measure $R\colon \Bb\to \R$ is uniquely determined by its acceptance set
$$\sA_R:=\big\{ X\in \Bb\, \big|\, R(X)\leq0\big\}$$  
and, given a probability measure $\P$ on $(\Omega, \mathcal F)$, the \textit{value at risk} (VaR) at level $\alpha\in(0,1)$ is given by 
$$\VaR^\alpha_\P (X) := \inf \big\{ m\in \R\,\big|\,\P(X-m>0)\leq\alpha\big\}.$$
For a default risk measure $\DR\colon \Bb\to [0,1]$, the condition $\DR(X)\leq0$ implies that $X\in \Bb$ exhibits no default risk whatsoever, so that an acceptance set similar to the one of a monetary risk measure is not meaningful for default risk measures.\ In this context, it is worth noting that default risk is not expressed in monetary units.\ In particular, default risk measures are not cash additive.\ Having the PD with respect to a probability measure $\P$ in mind as a prime example for a default risk measure, and looking at its connection to the VaR, a different approach seems more natural.\ The aim of this section is to formalize this relation between PD and VaR, and transfer it to general default risk measures. In view of Theorem \ref{thm.extension}, we focus on the case $C=\Bb$ throughout this section, and start with the following definition, which is central for the subsequent discussion. 

\begin{definition}
Let $\DR\colon \Bb\to [0,1]$ be a default risk measure and $\alpha\in (0,1)$. For $X\in \Bb$, let $$\VR (X) := \inf \big\{ m\in \R\,\big|\,\DR(X-m)\leq\alpha\big\}. $$ Then, $\VR$ is called the $\DR$\textit{-value at risk} at level $\alpha$.
\end{definition}

For default risk measures, we consider the following additional properties.

\begin{definition}\label{def.addprop}
 Let $\DR\colon \Bb\to [0,1]$ be a default risk measure.
\begin{enumerate}
\item[a)] We say that $\DR$ is \textit{quasi-convex} if
\begin{equation*}
\DR(\lambda X +(1-\lambda)Y)\leq \DR(X) \lor \DR(Y)\quad\text{for all } \lambda\in [0,1] \text{ and } X,Y\in \Bb.
\end{equation*}
\item[b)] We say that $\DR$ is \textit{scaling invariant} if
\begin{equation}\label{eq.sufficient_condition}
 \JS{\DR(X)= \DR(\lambda X)\quad\text{for all }\lambda>0 \text{ and } X\in \Bb.}
\end{equation}
\end{enumerate}
\end{definition}
\JS{We point out that quasi-convexity and scaling invariance, i.e., positive homogeneity of degree zero, are well-known properties in theoretical economics and mathematical finance. In the context of credit risk, scaling invariance has appeared in \cite{wang2021rating} in a different setting under the name \textit{nominal-invariance}. In the economic literature, positive homogeneity of degree zero or, simply, \textit{homogeneity} is also known under the name \textit{scale independence}, see for example \cite{MR0573315,MR0771167}.}

\begin{proposition}\label{prop.VR}
Let $\DR\colon \Bb\to [0,1]$ be a default risk measure. 
\begin{enumerate}
  \item[a)] For all $\alpha \in (0,1)$, $\VR$ is a monetary risk measure, cf.\ Definition \ref{def.riskmeasure}.
\item[b)] $\DR$ can be recovered from the family $(\VR)_{\alpha\in (0,1)}$ via
\begin{equation}\label{eq.recovery}
\DR(X)=\inf \Big(\big\{\alpha\in (0,1) \,\big|\, \VR(X)\leq 0  \big\}\cup\{1\}\Big).
\end{equation}
\item [c)] $\DR$ is scaling invariant if and only if $\VR$ is positively homogeneous for all $\alpha\in (0,1)$.
\item [d)] $\DR$ is quasi-convex if and only if $\VR$ is convex for all $\alpha\in (0,1)$. 
\end{enumerate}
\end{proposition}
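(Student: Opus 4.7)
For (a), I would directly verify the three axioms for $\VR$ of Definition~\ref{def.riskmeasure}. Monotonicity of $\DR$ yields $\{m \in \R : \DR(Y-m) \leq \alpha\} \subseteq \{m \in \R : \DR(X-m) \leq \alpha\}$ when $X \leq Y$, hence monotonicity of $\VR$. The normalization $\VR(0) = 0$ follows because $\DR(-m) = 0 \leq \alpha$ for $m \geq 0$ (using $-m \leq 0$ and $\DR(0)=0$), while $\DR(-m) = 1 > \alpha$ for $m < 0$ (using $-m > 0$). Cash additivity is the immediate change of variables $m \leftrightarrow m - c$ in the defining infimum. Parts (c) and (d) will then rely on (b) as a key tool for their converse implications.

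For (b), set $S := \{\alpha \in (0,1) : \VR(X) \leq 0\}$ and $p := \DR(X)$. The inequality $\inf(S \cup \{1\}) \leq p$ is immediate: for any $\alpha \in (0,1)$ with $\alpha \geq p$, putting $\mu = 0$ in the defining set of $\VR(X)$ yields $\alpha \in S$. For the reverse, take $\alpha \in S$ and observe that if $\VR(X) < 0$ then some $m < 0$ lies in $\{\mu : \DR(X-\mu) \leq \alpha\}$; combined with $X - m \geq X$ and monotonicity of $\DR$, this forces $\DR(X) \leq \DR(X-m) \leq \alpha$. The main obstacle I expect for the whole proposition is the boundary case $\VR(X) = 0$, in which the defining infimum may be approached only from above by a sequence $m_n \downarrow 0$ with $m_n > 0$; one must then invoke some form of right-continuity of $m \mapsto \DR(X-m)$ at $m = 0$, which is automatic for the canonical examples in Examples~\ref{ex:PD}--\ref{WC PD}.

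For (c), the forward direction follows from the substitution $m = \lambda \mu$ inside the defining infimum of $\VR(\lambda X)$, yielding $\VR(\lambda X) = \lambda \inf\{\mu \in \R : \DR(\lambda(X-\mu)) \leq \alpha\}$; scaling invariance then identifies this with $\lambda \VR(X)$ (the subcase $X - \mu \leq 0$ is automatic, as both sides of $\DR(\cdot) \leq \alpha$ hold with value $0$; the subcase $X - \mu \geq 0$ is the defining property; and the mixed-sign case relies on the extension of scaling invariance to all of $\Bb$ developed in Section~\ref{sec.properties}). Conversely, positive homogeneity of $\VR$ gives $\VR(\lambda X) \leq 0 \iff \VR(X) \leq 0$, and the recovery formula from (b) then delivers $\DR(\lambda X) = \DR(X)$.

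For (d), in the forward direction, given $\ep > 0$, pick $m_1 \leq \VR(X) + \ep$ and $m_2 \leq \VR(Y) + \ep$ with $\DR(X - m_1) \leq \alpha$ and $\DR(Y - m_2) \leq \alpha$; rewriting $\lambda(X - m_1) + (1-\lambda)(Y - m_2) = \lambda X + (1-\lambda)Y - (\lambda m_1 + (1-\lambda) m_2)$ and invoking quasi-convexity of $\DR$ places $\lambda m_1 + (1-\lambda) m_2$ in the defining set of $\VR(\lambda X + (1-\lambda) Y)$, yielding convexity of $\VR$ as $\ep \downarrow 0$. Conversely, for any $\alpha > \DR(X) \vee \DR(Y)$, the easy direction of (b) gives $\VR(X), \VR(Y) \leq 0$; convexity of $\VR$ then forces $\VR(\lambda X + (1-\lambda) Y) \leq 0$, and the recovery formula of (b) transfers this to $\DR(\lambda X + (1-\lambda) Y) \leq \alpha$, yielding quasi-convexity as $\alpha \downarrow \DR(X) \vee \DR(Y)$.
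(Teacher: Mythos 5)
Your proposal follows the same route as the paper's own proof in all four parts: direct verification of the axioms for a), the ``plug in $m=0$, extract a negative $m$'' argument for b), the substitution $m=\lambda\mu$ combined with the recovery formula for c), and the level-set/near-optimizer argument for d). Part a) is complete, and the forward direction of d) is complete as well (it does not even need b), since $\VR(X)=\inf\{m\in\R\,|\,X-m\in\mathcal L_\alpha\}$ with $\mathcal L_\alpha:=\{Z\in\Bb\,|\,\DR(Z)\leq\alpha\}$, and convexity of $\mathcal L_\alpha$ is all that is used).

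The two obstacles you flag are genuine, and you should know that the paper's proof does not resolve them either; it simply asserts the corresponding steps. In b), the missing implication is $\VR(X)\leq 0\Rightarrow\DR(X)\leq\alpha$ (in the paper's formulation, the claim that $\DR(X)=1$ forces $\VR(X)>0$), and it requires exactly the right-continuity you name, namely $\DR(X)\leq\sup_{m>0}\DR(X-m)$. This holds for $\PD_\P$ and worst-case PDs by continuity from below of measures, but it can already fail for a distorted PD: take $\P$ Lebesgue on $[0,1]$, $X(\omega)=\omega$, and $T(p)=p/2$ for $p<1$ with $T(1)=1$; then $\DR(X)=1$ while $\DR(X-m)=T\big(\P(X>m)\big)\leq\tfrac12$ for every $m>0$, so $\VaR^{1/2}_\DR(X)=0$ and the right-hand side of \eqref{eq.recovery} equals $\tfrac12\neq 1$. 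Since the converse directions of c) and d) are routed through this half of b), they inherit the same gap. In c), your worry about the mixed-sign case is also justified: the substitution needs $\DR(\lambda(X-\mu))=\DR(X-\mu)$ at values of $\mu$ where $X-\mu$ changes sign, which scaling invariance (postulated only for nonnegative arguments) does not provide; the paper's translation by $\inf X$ does not repair this, and Section \ref{sec.properties} extends invariance properties off the nonnegative cone only under additional hypotheses such as liquidity invariance. Indeed, on a two-point $\Omega$ one can construct a scaling invariant $\DR$ with $\VaR^{1/4}_\DR(2X)\neq 2\,\VaR^{1/4}_\DR(X)$ for some $X\geq 0$. So your proof is incomplete exactly where the statement needs a supplementary continuity or invariance hypothesis, and your diagnosis of where the difficulty lies is correct.
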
 

\begin{remark}\label{rem.varconvex}
Although a default risk measure $\DR\colon \Bb\to \R$ is never convex, there are default risk measures that are quasi-convex. An example for such a default risk measure is given by
$$ \DR(X):= \begin{cases} 0,& \text{if }R(X)\leq 0,\\
1,&\text{else},
\end{cases}
\quad\text{for all }X\in \Bb,
$$
where $R\colon \Bb\to \R$ is a monetary risk measure in the sense of Definition \ref{def.riskmeasure}. Then, for $X\in \Bb$ and $m\in \R$, $\DR(X-m)\leq 0$ if and only if $R(X)\leq m$. Hence,
\[
\VR(X)=R(X)\quad\text{for all }\alpha\in (0,1).
\]
By Proposition \ref{prop.VR} d), $\DR$ is quasi-convex if and only if $R$ is convex.\ Moreover, by Proposition \ref{prop.VR} c), $\DR$ is scaling invariant if and only if $R$ is positively homogeneous.
\end{remark}

Thinking of the family $(\VR)_{\alpha\in (0,1)}$ for a default risk measure $\DR$, leads to the following definition.
\begin{definition}
A \textit{generalized quantile function} is a family $(R^\alpha)_{\alpha\in (0,1)}$, where
\begin{enumerate}[(i)]
\item $R^\alpha\colon \Bb\to \R$ is a monetary risk measure for each $\alpha\in (0,1)$, cf.\ Definition \ref{def.riskmeasure},
\item $R^\beta(X)\leq R^\alpha(X)$ for all $X\in \Bb$ and $\alpha,\beta\in(0,1)$ with $\alpha\leq \beta$,
\item for all $\alpha\in (0,1)$ and $X\in \Bb$, $$R^\alpha(X)=\sup_{\beta\in (\alpha,1)}R^\beta(X).$$
\end{enumerate}
\end{definition}

\begin{remark}
For each default risk measure $\DR\colon \Bb\to [0,1]$, the family $\big(\VR\big)_{\alpha\in (0,1)}$ is a generalized quantile function.\ The properties (i) and (ii) are immediate consequences of Proposition \ref{prop.VR} a) and the definition of the family $\big(\VR\big)_{\alpha\in (0,1)}$, respectively.\ We prove (iii) by contradiction.\ To that end, let $\alpha\in (0,1)$ and $X\in \Bb$, and suppose that
\[
\VR(X)>\sup_{\beta\in (\alpha,1)}\VaR_\DR^\beta(X).
\]
Then, there exists some $m\in \R$ with $$\VR(X)>m>\VaR_\DR^\beta(X) \quad \text{for all }\beta\in (\alpha,1),$$
which implies that $\alpha<\DR(X-m)\leq \beta$ for all $\beta\in (\alpha,1)$, and thus leads to contradiction.
\end{remark}
The following theorem together with Proposition \ref{prop.VR} b) and the previous remark shows that there is a one-to-one relation between default risk measures and generalized quantile functions.

\begin{theorem}\label{thm.recalculate}
Let $(R^\alpha)_{\alpha\in (0,1)}$ be a generalized quantile function and
$$ \DR(X):=\inf \Big(\big\{\alpha\in (0,1) \,\big|\, R^\alpha(X)\leq 0 \big\}\cup\{1\}\Big).$$
Then, $\DR\colon \Bb\to[0,1]$ is a default risk measure and $R^\alpha=\VR$ for all $\alpha\in (0,1)$.
\end{theorem}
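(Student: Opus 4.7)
The plan is to verify the two assertions in sequence: first that $\DR$ is a default risk measure, then that $R^\alpha$ and $\VR$ coincide on $\Bb$ for every $\alpha \in (0,1)$.

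For the first part, I expect each clause in Definition \ref{Def DRM} to be immediate from the corresponding properties of $R^\alpha$. Monotonicity in $X$ follows because $X\leq Y$ implies $R^\alpha(X)\leq R^\alpha(Y)$ for every $\alpha$, so the set of admissible $\alpha$'s shrinks under $Y$, hence $\DR(X)\leq\DR(Y)$. For $\DR(0)=0$, note that $R^\alpha(0)=0\leq 0$ for every $\alpha\in(0,1)$, so the infimum is taken over all of $(0,1)$. For $\DR(m)=1$ with $m>0$, cash additivity gives $R^\alpha(m)=m>0$, so the set $\{\alpha\in(0,1):R^\alpha(m)\leq 0\}$ is empty and the infimum reduces to $\inf\{1\}=1$. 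That $\DR$ takes values in $[0,1]$ is built into the definition.

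For the identification $R^\alpha(X)=\VR(X)$, fix $\alpha\in(0,1)$ and $X\in\Bb$, and establish two inequalities. For $\VR(X)\leq R^\alpha(X)$, I would pick $m>R^\alpha(X)$ and use property (ii) of the generalized quantile function (monotonicity of $\gamma\mapsto R^\gamma$): for every $\gamma>\alpha$, $R^\gamma(X)\leq R^\alpha(X)<m$, i.e.\ $R^\gamma(X-m)<0$. Hence $(\alpha,1)\subset\{\gamma\in(0,1):R^\gamma(X-m)\leq 0\}$, whence $\DR(X-m)\leq\alpha$, and letting $m\downarrow R^\alpha(X)$ gives $\VR(X)\leq R^\alpha(X)$.

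The reverse inequality $R^\alpha(X)\leq\VR(X)$ is the step I expect to be the most delicate, because the infimum in the definition of $\DR$ need not be attained, and I must get from infimum-level information back to the value of $R^\alpha$ — this is exactly where the right-continuity property (iii) enters. Fix $m$ with $\DR(X-m)\leq\alpha$; then $S:=\{\beta\in(0,1):R^\beta(X-m)\leq 0\}$ is nonempty with $\inf S\leq\alpha$. For every $\gamma>\alpha$, choose $\beta\in S$ with $\beta<\gamma$ (possible since $\inf S\leq\alpha<\gamma$); by monotonicity of $R^{\cdot}$ in its index, $R^\gamma(X)\leq R^\beta(X)\leq m$. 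Taking the supremum over $\gamma>\alpha$ and invoking property (iii) yields $R^\alpha(X)=\sup_{\gamma>\alpha}R^\gamma(X)\leq m$. Since this holds for every such $m$, we conclude $R^\alpha(X)\leq\VR(X)$, finishing the proof.
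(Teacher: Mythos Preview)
Your proof is correct and follows essentially the same approach as the paper. The only minor difference is in the inequality $\VR(X)\leq R^\alpha(X)$: rather than approximating with $m>R^\alpha(X)$ and letting $m\downarrow R^\alpha(X)$, the paper simply plugs in $m=R^\alpha(X)$ directly, observing that $R^\alpha\big(X-R^\alpha(X)\big)=0$ already gives $\DR\big(X-R^\alpha(X)\big)\leq\alpha$; your argument for the reverse inequality is the same as the paper's up to notational choices.
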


\begin{proposition} \label{Prop. VaR}
    Let $\DR\colon \Bb\to [0,1]$ be a default risk measure and $\mathcal P$ be a nonempty set of probability measures. Then, the following are equivalent:
    \begin{enumerate}[(i)]
    \item For all $X\in \Bb$,
    \[
    \DR(X)=\sup_{\P\in \mathcal P}\P(X>0).
    \]
    \item For all $X\in \Bb$ and all $\alpha\in (0,1)$,
    \begin{equation}\label{eq.robustvar}
    \VR(X)=\sup_{\P\in \mathcal P}\VaR^\alpha_\P(X).
    \end{equation}
    \end{enumerate}
\end{proposition}

\begin{remark}\label{rem.comonotone}\
\begin{enumerate}
\item[a)]  In \cite[Proposition 4.6]{MR4318899}, it is shown that the worst-case value at risk, given by \eqref{eq.robustvar}, is \textit{comonotonically additive}. More generally, \cite[Corollary 1]{MR2888471} implies that $\VR$ is comonotonically additive if $\DR(X)=\DR\big(\eins_{\{X>0\}}\big)$ for all $X \in\Bb $, i.e., $\DR$ is given by a capacity $c\colon \FF \to [0,1]$, cf. Section \ref{sc:Choquet} below.
In fact, \cite[Corollary 1]{MR2888471} states that, for every $\alpha\in(0,1)$, there exists a capacity $c_\alpha\colon \FF\to [0,1]$ such that $\VR$ is given by the Choquet integral w.r.t.\ $c_\alpha$, i.e.,
\begin{equation}\label{eq.chambers}
	\VR(X)= \int X\, {\rm d} c_\alpha \quad \text{for all }X\in\Bb,
\end{equation}
see Section \ref{sc:Choquet} below for a definition of the Choquet integral.\ Clearly, if $\VR$ satisfies \eqref{eq.chambers}, it is positively homogeneous for all $\alpha\in (0,1)$, so that $\DR$ is scaling invariant by Proposition \ref{prop.VR} c).\ The natural question arises if a representation of $\VR$ in terms of \eqref{eq.chambers} for all $\alpha\in (0,1)$ already implies that $\DR$ is given by a capacity as well. We provide a negative answer to this question in part b).

\item[b)] Let $\P$ be a probability measure on $(\Omega,\FF)$, and assume that there exists some event $A\in \FF$ with $p:=\P(A)\in (0,1)$.\ Choosing, in the situation of Remark \ref{rem.varconvex}, $R=\E_\P$ as the expected value under $\P$, it follows that $\VR=\E_\P$, so that $\VR$ satisfies \eqref{eq.chambers} for all $\alpha\in (0,1)$. However, for $X:=\frac1p \eins_A-\frac{1}{1-p}\eins_{A^c}$, it follows that
\[
\E_\P(X)=0\neq 1=\E_\P(X^+),
\]
so that $\DR(X)=0\neq 1= \DR(X^+)$.\ Hence, by Theorem \ref{thm.main} below, $\DR$ is \textit{not} given by a capacity.
\end{enumerate}
\end{remark}

\begin{proposition} \label{Prop. robust VaR}
    Let $\DR\colon \Bb\to [0,1]$ be a default risk measure, $\P$ a probability measure, and $T\colon [0,1]\to [0,1]$ a nondecreasing and lower semicontinuous function with $T(0)=0$ and $T(1)=1$. Then, the following are equivalent:
    \begin{enumerate}[(i)]
    \item For all $X\in \Bb$,
    \[
    \DR(X)=T\big(\P(X>0)\big).
    \]
    \item For all $X\in \Bb$ and all $\alpha\in (0,1)$,
    \begin{equation}\label{eq.distortedvar}
    \VR(X)=\VaR^{T^{-1}(\alpha)}_\P(X),
    \end{equation}
    where $T^{-1}(a):=\inf\{b\in (0,1)\, |\, T(b)> a\}$ for all $a\in (0,1)$.
    \end{enumerate}
\end{proposition}

\JS{We conclude this section with two examples for generalized quantile functions arising in the context of regulatory risk measures and their resulting default risk measures.}

\begin{example}\
\JS{Let $\P\colon \FF\to [0,1]$ be a probability measure.}
\begin{enumerate}
 \item[a)] \JS{We consider the \textit{expected shortfall} $$\ES^\alpha_\P(X):=\frac1\alpha \int_0^\alpha \VaR^\lambda_\P(X)\, \d \lambda \quad\text{of }X\in \Bb \text{ at level }\alpha\in (0,1).$$ Then, $\ES^\alpha_\P\colon \Bb\to \R$ defines a monetary risk measure for each $\alpha\in (0,1)$, and $\ES^\beta_\P(X)\leq \ES^\alpha_\P(X)$ for all $X\in \Bb$ and $\alpha, \beta\in (0,1)$ with $\alpha\leq \beta$. Moreover, by the monotone convergence theorem, Property (iii) in the definition of a generalized quantile function is satisfied. Then, the default risk measure $\DR\colon \Bb\to [0,1]$ related to the generalized quantile function $(\ES_\P^\alpha)_{\alpha\in (0,1)}$ is given by
 \[
 \DR(X):=\inf \Big(\big\{\alpha\in (0,1) \,\big|\, \ES_\P^\alpha(X)\leq 0 \big\}\cup\{1\}\Big)\quad\text{for all }X\in \Bb.
 \]
 Let $X\in \Bb$ with $X\neq 0$. First, observe that $$\DR(X)=1\quad \text{if and only if}\quad \E_\P(X)=\int_0^1\VaR_\P^\lambda(X)\, \d \lambda\geq 0.$$ If $\E(X)<0$, then the continuity of the map $[0,1]\to \R,\;\alpha\mapsto \int_0^\alpha \VaR_\P^\lambda(X)\, \d \lambda$ implies that
 \begin{align*}
\DR(X)&=\inf\bigg\{\alpha\in (0,1)\, \bigg|\, \frac{1}{\alpha}\int_0^\alpha \VaR_\P^\lambda(X)\, \d \lambda\leq 0\bigg\}\\
&=\inf\bigg\{\alpha\in [0,1)\, \bigg|\, \int_0^\alpha \VaR_\P^\lambda(X)\, \d \lambda=0\bigg\}.
 \end{align*}
 In particular, by Theorem \ref{thm.recalculate} and Proposition \ref{Prop. VaR} with $\mathcal P=\{\P\}$, $$\DR(X)\geq \P(X>0)=\PD_\P(X)\quad\text{for all }X\in \Bb,$$ i.e., $\DR$ is a more conservative default risk measure than $\PD_\P$.}
 \item[b)] \JS{For $X\in \Bb$ and $\alpha\in (0,1)$, we consider the \textit{$\alpha$-expectile}, given by
 $$\Exp_\P^\alpha(X):=\inf\big\{m\in \R\, \big|\, (1-\alpha)\E_\P[(X-m)^-]- \alpha \E_\P[(X-m)^+]\leq 0\big\}.$$
 Then, $\Exp_\P^\alpha\colon \Bb\to \R$ is a monetary risk measure for all $\alpha\in (0,1)$.\ By the monotone convergence theorem, for all $X\in \Bb$,
 \[
 \alpha \E_\P\big[\big(X-\Exp_\P^\alpha(X)\big)^+\big]= (1-\alpha) \E_\P\big[\big(X-\Exp_\P^\alpha(X)\big)^-\big].
 \] 
 In particular, $\Exp^{1/2}_\P(X)=\E_\P(X)$. Since the map $(0,1)\to (0,\infty),\, \alpha\mapsto \frac{1-\alpha}\alpha$ is decreasing and continuous, the properties (ii) and (iii) of a generalized quantile function follow.\ Let $X\in \Bb$ and 
\[
 \DR(X):=\inf \Big(\big\{\alpha\in (0,1) \,\big|\, \Exp_\P^\alpha(X)\leq 0 \big\}\cup\{1\}\Big).
 \]
Then, $\DR(X)=1$ if and only if $X\geq 0$ $\P$-a.s and $\P(X>0)>0$. Moreover, $\DR(X)=0$ if and only if $X\leq 0$ $\P$-a.s.\ In all other cases, i.e., if $\DR(X)\in (0,1)$, then $\DR(X)$ is given by the unique solution $\DR(X):=\alpha^*\in (0,1)$ to the equation}
\[
\JS{\alpha^* \E_\P(X^+)=(1-\alpha^*)\E_\P(X^-).}
\]
\end{enumerate}
\end{example}

\section{Representation via capacities}\label{sec.properties}
After discussing first characteristics, generalizations of the $\VaR$, and different examples for default risk measures, we now aim for equivalent conditions for default risk measures to admit a representation of the form
\begin{equation}\label{eq.representation}
\DR(X)=\DR\big(\eins_{\{X>0\}})\quad\text{for all }X\in \Bb.
\end{equation}
This representation allows to describe the default risk measure $\DR$ via its related capacity, given by $c(A):=\DR(\eins_A)$ for $A\in \FF$. This description allows to derive explicit representations of $\DR$ as worst-case PDs and distorted PDs in Section \ref{sc:Choquet} and Section \ref{sec:distorted}, respectively.\ 

\JS{In the following, we discuss a slightly weaker notion of scaling invariance for default risk measures.}
\begin{definition}\label{def.dfscalinginv}
\JS{We say that a default risk measure $\DR\colon \Bb\to \R$ is \textit{default scaling invariant} if}
\begin{equation*}
\JS{\DR(X)= \DR(\lambda X)\quad\text{for all }\lambda>0 \text{ and } X\in \Bb\text{ with }X\geq 0.}
\end{equation*}
\end{definition}
Default risk measures that are not \JS{default} scaling invariant can depend, for example, on the maximum possible default.\ This applies, for example to the default risk measure in Example \ref{WS}, where a warning signal $\gamma$ must be exceeded by $\sup X$ in order to switch to a more conservative default risk measure. Another example for a default risk measure that is not default scaling invariant is given in Example \ref{ex.increasing}.

From an economic point of view, \JS{default} scaling invariance means that the default risk does not change if all potential losses, i.e., $X(\omega)>0$ are multiplied with a positive constant. Hence, for a given customer $X\geq 0$, we are not interested how high this customer defaults, which would be the case in Example \ref{WS}, but rather if they default or not. All scenarios in which the customer cannot pay a due at the end of the observation period are equivalent to the scenarios in which the due multiplied by $\lambda >0$ cannot be paid. Certainly, therefore, it is a necessary condition in order to obtain the desired representation \eqref{eq.representation}.

\begin{remark} \label{rk:scaling}
 Before we provide sufficient and necessary conditions for $\DR$ to have the form \eqref{eq.representation}, we first discuss various properties of $\DR$ that will appear at a later stage in this section.
\begin{enumerate}[a)]
\item Let $\DR\colon \Bb\to [0,1]$ be a \JS{default} scaling invariant default risk measure. Then, for all $X\in \Bb$,
\begin{equation}\label{eq.conservative}
\DR(X)\leq \DR\big(\eins_{\{X>0\}}\big),
\end{equation}
Indeed, for $X\in \Bb$ with $X\leq 0$,
$\DR(X)\leq \DR(0)=\DR\big(\eins_{\{X>0\}}\big)$.\
On the other hand, if $X\in \Bb$ with $\sup X>0$, the \JS{default} scaling invariance of $\DR$ yields that
 \[
 \DR(X)\leq \DR(X^+)\leq \DR\big((\sup X) \eins_{\{X>0\}}\big)= \DR\big(\eins_{\{X>0\}}\big).
 \]
Condition \eqref{eq.conservative} states that, under the mild assumption of \JS{default} scaling invariance, the most conservative choice of a default risk measure $\DR$ is a representation via a capacity. 
 \item \JS{To see that scaling invariance, in particular, default scaling invariance is not enough to come up with the representation $\DR(X)=\DR\big( \eins_{\{X>0\}}\big)$ for all $X\in \Bb$, we again pick up Example \ref{ex.mean_value}, where a default risk measure was defined by means of a monetary risk measure $R$.} Let $\P$ be a probability measure on $\mathcal F$, and consider the case $R(X)=\E_\P(X)$ for all $X\in \Bb$, where $\E_\P(\,\cdot\,)$ denotes the mean value under $\P$. \JS{Then, the default risk measure in Example \ref{ex.mean_value} is scaling invariant by Remark \ref{rem.varconvex}. However, the default risk measure very much depends on both the positive and negative values of $X\in \Bb$ as soon as there exists an event $A\in \FF$ with $\P(A)\in (0,1)$. In fact, let $A\in \FF$ with $p:=\P(A)\in (0,1)$ and define
 \[
 X:=\eins_A-\frac{2p}{1-p}\eins_{A^c}.
 \]
Then,
\[
 \E_\P(X)=\P(A)-\frac{1+p}{1-p}\cdot \P(A^c)=-1<0,
\]
so that $\DR(X)=0$. On the other hand, $\P(X>0)=\P(A)=p>0$, which implies that $\DR\big(\eins_{\{X>0\}}\big)=1$.} Nevertheless, in Proposition \ref{prop.condition} below, we will see that, up to certain continuity requirements, \JS{default} scaling invariance implies the equality $\DR(X)=\DR\big( \eins_{\{X>0\}} \big)$ for all $X\in \Bb$ with $X\geq 0$. \JS{Later we seek for additional properties of $\DR$ such as $\DR(X)=\DR(X^+)$ for all $X\in \Bb$.}
\item In order to explain why additional continuity assumptions are needed in Proposition \ref{prop.condition} below, we regard, for a given probability measure $\P$ on $\FF$, the default risk measure $\DR\colon \Bb\to [0,1]$, given by
$$
\DR(X):= \begin{cases}
  1,& \text{if } \P(\{X>\varepsilon\})=1 \text{ for some }\varepsilon>0,\\
  \frac{\P(\{X>0\})}{2}, & \text{otherwise.}  
\end{cases}
$$
 As such $\DR$ is \JS{default} scaling invariant and does not depend on the part, where $X\in \Bb$ is less or equal to $0$.\ However, if $(\Omega,\FF,\P)$ is atomless and $X\in \Bb$ is uniformly distributed on $[0,1]$ under $\P$, then
$$
\P\big(X\in (0,\varepsilon)\big)=\varepsilon>0\quad\text{for all }\varepsilon>0,
$$
so that
$$\DR(X)=\frac{1}{2}\neq 1=\DR\big( \eins_{\{X>0\}} \big).$$ 
The following proposition clarifies this \JS{elementary} example.
 \end{enumerate}
\end{remark}

\begin{proposition}\label{prop.condition}
Let $\DR\colon \Bb\to [0,1]$ be a \JS{default} scaling invariant default risk measure.
\begin{enumerate}[a)]
\item For all $X\in \Bb$ with $X\geq 0$,
\begin{equation}\label{eq.capacity_below}
\DR(X)\leq\DR\big(\eins_{\{X>0\}}\big)\leq \inf_{\varepsilon >0} \DR\big(X+\varepsilon\eins_{\{X>0\}}\big).
\end{equation}
\item For all $X\in \Bb$ with $X\geq 0$,
\begin{equation}\label{eq.quasi_capacity}
 \sup_{\varepsilon >0} \DR\big(\eins_{\{X>\varepsilon\}}\big)\leq \DR(X)\leq \DR\big(\eins_{\{X>0\}}\big).
\end{equation}
\end{enumerate}
\end{proposition}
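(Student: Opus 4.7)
The proof of this proposition should be largely a direct exploitation of scaling invariance together with monotonicity, by sandwiching $X$ between appropriately rescaled indicator functions. There is no real obstacle; the only subtle point is to keep track of which direction of inequality uses which property.

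\medskip

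For part a), the first inequality $\DR(X)\leq \DR(\eins_{\{X>0\}})$ was essentially observed in Remark \ref{rk:scaling} a): if $X=0$ the inequality is trivial, while if $\sup X>0$ one uses that $X\leq (\sup X)\,\eins_{\{X>0\}}$ (valid since $X\geq 0$), and concludes via monotonicity together with scaling invariance applied to the constant $\sup X>0$. For the second inequality, fix $\varepsilon>0$ and observe that since $X\geq 0$,
\[
X+\varepsilon\eins_{\{X>0\}}\geq \varepsilon\eins_{\{X>0\}}.
\]
Monotonicity gives $\DR(X+\varepsilon\eins_{\{X>0\}})\geq \DR(\varepsilon\eins_{\{X>0\}})$, and scaling invariance (applied to the nonnegative function $\eins_{\{X>0\}}$ with $\lambda=\varepsilon>0$) yields $\DR(\varepsilon\eins_{\{X>0\}})=\DR(\eins_{\{X>0\}})$. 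Taking the infimum over $\varepsilon>0$ then delivers the claim.

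\medskip

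For part b), the right-hand inequality $\DR(X)\leq \DR(\eins_{\{X>0\}})$ is exactly the statement already established in part a), so nothing new is needed. For the left-hand inequality, the central observation is that for every $\varepsilon>0$,
\[
\varepsilon\eins_{\{X>\varepsilon\}}\leq X,
\]
because on $\{X>\varepsilon\}$ the left-hand side equals $\varepsilon<X$, and on $\{X\leq\varepsilon\}$ it vanishes while $X\geq 0$. Monotonicity therefore gives $\DR(\varepsilon\eins_{\{X>\varepsilon\}})\leq \DR(X)$, and scaling invariance rewrites the left-hand side as $\DR(\eins_{\{X>\varepsilon\}})$. Taking the supremum over $\varepsilon>0$ completes the proof.

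\medskip

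The whole argument is routine; the only thing to watch is that scaling invariance as given only applies to nonnegative functions, which is fine here since both $\eins_{\{X>0\}}$ and $\eins_{\{X>\varepsilon\}}$ are nonnegative and $X\geq 0$ is a standing assumption throughout the proposition.
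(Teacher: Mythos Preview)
Your proof is correct and follows essentially the same approach as the paper's own proof: both use monotonicity together with scaling invariance applied to the nonnegative indicator functions, invoking Remark~\ref{rk:scaling}~a) for the inequality $\DR(X)\leq \DR(\eins_{\{X>0\}})$. The only cosmetic difference is that for the left-hand inequality in part~b) the paper passes through the intermediate function $X\eins_{\{X>\varepsilon\}}$ before reducing to $\eins_{\{X>\varepsilon\}}$, whereas you compare $\varepsilon\eins_{\{X>\varepsilon\}}$ directly with $X$; your route is marginally more direct but the underlying idea is identical.
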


The previous proposition shows that \JS{default} scaling invariance together with $\DR(X)=\DR(X^+)$, for all $X\in \Bb$, implies that the default risk measure can be represented  via a capacity except for some additional continuity properties.\ The inequality in part a) is quite mild from a theoretical perspective, because reducing a customer's cash flows by an amount of money $\varepsilon>0$ in scenarios, where the customer is defaulting anyways, has no impact on the default scenarios. In contrast to this, the continuity assumption 
$$
\sup_{\varepsilon >0} \DR\big(\eins_{\{X>\varepsilon\}}\big)=\DR\big(\eins_{\{X>0\}}\big)\quad\text{for all }X\in \Bb
$$
is not for free since, in mathematical terms, it requires some mild form of continuity from below for the related capacity.\ From an economic point of view, however, it is quite intuitive since $\varepsilon>0$ can be chosen arbitrarily small, e.g., in such a way that it falls below the smallest possible amount of money. A theoretical increase in capital by $10^{-10}$ euros should not change the default risk of the respective customer. This motivates to introduce the following definition.

\begin{definition}\label{def.illiquidity}
Let $\DR\colon \Bb\to [0,1]$ be a default risk measure. We say that $\DR$ is \textit{illiquidity invariant}, if, for all $m>0$ and $X\in \Bb$ with $X\geq 0$,
$$\DR(X)=\DR\big(X+m\eins_{\{X>0\}}\big).$$
\end{definition}

In other words, \textit{illiquidity invariance} means that customers that are in default or, loosely speaking, illiquid remain illiquid if their debt due increases or their capital decreases in the sense that their default risk does not change. \JS{The following proposition establishes a first connection between illiquidity invariance, default scaling invariance, and a representation via indicator functions.}

\begin{proposition}\label{prop.capacity_represenatation}
Let $\DR\colon \Bb\to [0,1]$ be a default risk measure.\ Then, the following statements are equivalent.
\begin{enumerate}[(i)]
\item $\DR$ is \JS{default} scaling invariant and $$\DR(X)=\inf_{\varepsilon>0}\DR(X+\varepsilon \eins_{\{X>0\}})\quad\text{for all }X\in \Bb\text{ with }X\geq 0.$$
\item $\DR$ is illiquidity invariant.
\item For all $X\in \Bb$ with $X\geq 0$,
$$\DR(X)=\DR\big(\eins_{\{X>0\}}\big).$$
\end{enumerate}
\end{proposition}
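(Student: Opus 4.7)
The plan is to use (iii) as a central pivot and establish the two equivalences (iii)$\,\Leftrightarrow\,$(i) and (iii)$\,\Leftrightarrow\,$(ii), rather than a single cycle. This is natural because (iii) is a pure representation condition, while (i) and (ii) are ``infinitesimal'' and ``additive'' strengthenings of scaling invariance.

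The implications (iii)$\,\Rightarrow\,$(i) and (iii)$\,\Rightarrow\,$(ii) are nearly immediate. For any $X\in\Bb$ with $X\geq 0$, any $\lambda>0$ and any $\varepsilon,m>0$, the three sets
$$
\{\lambda X>0\},\qquad \{X+\varepsilon\eins_{\{X>0\}}>0\},\qquad \{X+m\eins_{\{X>0\}}>0\}
$$
all coincide with $\{X>0\}$. Applying (iii) to each side of the scaling identity, of the infimum identity in (i), and of the illiquidity invariance equation in (ii) collapses every term to $\DR(\eins_{\{X>0\}})$, so all three conclusions hold at once.

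For (i)$\,\Rightarrow\,$(iii), let $X\in \Bb$ with $X\geq 0$. Proposition~\ref{prop.condition}\,a), which uses only scaling invariance, already provides $\DR(X)\leq \DR(\eins_{\{X>0\}})$. For the reverse inequality I would use $X+\varepsilon\eins_{\{X>0\}}\geq \varepsilon\eins_{\{X>0\}}$ pointwise, so monotonicity together with scaling invariance of $\DR$ applied to the non-negative function $\eins_{\{X>0\}}$ yields
$$
\DR\bigl(X+\varepsilon\eins_{\{X>0\}}\bigr)\geq \DR\bigl(\varepsilon\eins_{\{X>0\}}\bigr)=\DR\bigl(\eins_{\{X>0\}}\bigr).
$$
Passing to the infimum over $\varepsilon>0$ and invoking the assumed infimum identity in (i) delivers $\DR(X)\geq \DR(\eins_{\{X>0\}})$ and completes the argument.

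The main work, and the anticipated obstacle, is (ii)$\,\Rightarrow\,$(iii), because illiquidity invariance is a purely additive hypothesis yet must produce scaling-type behaviour. My plan is a two-step bootstrap. First, I would show that $\DR(\lambda\eins_A)=\DR(\eins_A)$ for every $A\in\FF$ and $\lambda>0$ by distinguishing two cases: for $\lambda\geq 1$, apply illiquidity invariance to $\eins_A$ with parameter $\lambda-1\geq 0$, observing $\{\eins_A>0\}=A$, to get $\DR(\eins_A)=\DR(\eins_A+(\lambda-1)\eins_A)=\DR(\lambda\eins_A)$; for $0<\lambda<1$, apply the same invariance to $\lambda\eins_A$ with parameter $1-\lambda>0$ and $\{\lambda\eins_A>0\}=A$ to get $\DR(\lambda\eins_A)=\DR(\eins_A)$. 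Second, for general $X\in\Bb$ with $X\geq 0$ and $X\neq 0$ set $M:=\sup X>0$; then $X\leq M\eins_{\{X>0\}}$ pointwise, so monotonicity combined with the bootstrap yields
$$
\DR(X)\leq \DR\bigl(M\eins_{\{X>0\}}\bigr)=\DR\bigl(\eins_{\{X>0\}}\bigr),
$$
while illiquidity invariance gives $\DR(X)=\DR(X+\eins_{\{X>0\}})\geq \DR(\eins_{\{X>0\}})$ by monotonicity, since $X+\eins_{\{X>0\}}\geq \eins_{\{X>0\}}$. The trivial case $X=0$ is handled directly. The hard part of the whole proof is thus the indicator bootstrap: once $\DR(\lambda\eins_A)=\DR(\eins_A)$ is available, the remaining step for arbitrary non-negative $X\in\Bb$ is an automatic sandwich between $\eins_{\{X>0\}}$ and $M\eins_{\{X>0\}}$.
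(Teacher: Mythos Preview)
Your proof is correct. The approach is essentially the same as the paper's, with only an organizational difference: the paper pivots on (i), proving (i)$\,\Leftrightarrow\,$(ii) directly and then (i)$\,\Leftrightarrow\,$(iii) via Proposition~\ref{prop.condition}, whereas you pivot on (iii). In particular, your (ii)$\,\Rightarrow\,$(iii) via the indicator bootstrap is a streamlined variant of the paper's (ii)$\,\Rightarrow\,$(i), which establishes full scaling invariance $\DR(\lambda X)=\DR(X)$ for all $X\geq 0$ using the same sandwich idea $\lambda X\leq X+(\lambda-1)(\sup X)\eins_{\{X>0\}}$; you only need this on indicators, which is marginally more economical but not substantively different.
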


In analogy to Definition \ref{def.illiquidity}, we speak of \textit{liquidity invariance}, when the default risk remains the same if capital is added in scenarios where the customer is able to make all payments due. 

\begin{definition}
Let $\DR\colon \Bb\to [0,1]$ be a default risk measure. We say that $\DR$ is \textit{liquidity invariant}, if, for all $m>0$ and $X\in \Bb$ with $X\geq 0$,
$$\DR(X)=\DR\big(X-m\eins_{\{X=0\}}\big).$$
\end{definition}

\begin{lemma}\label{lem.X_plus}
Let $\DR\colon \Bb\to [0,1]$ be a default risk measure. Then, the following conditions are equivalent.
\begin{enumerate}[(i)]
\item $\DR(X)=\DR(X^+)$ for all $X\in \Bb$.
\item $\DR(X)=\DR\big(X-m\eins_{\{X\leq 0\}}\big)$ for all $m>0$ and $X\in \Bb$.
\item $\DR$ is liquidity invariant.
\end{enumerate}
\end{lemma}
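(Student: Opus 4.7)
The plan is to prove the implications (i)$\,\Rightarrow\,$(ii)$\,\Rightarrow\,$(iii)$\,\Rightarrow\,$(i), making repeated use of monotonicity of $\DR$ together with the identity $X^+=X\eins_{\{X>0\}}$.

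For (i)$\,\Rightarrow\,$(ii), I would fix $m>0$ and $X\in\Bb$, and simply observe that the random variable $Y:=X-m\eins_{\{X\leq 0\}}$ satisfies $Y=X>0$ on $\{X>0\}$ and $Y=X-m\leq -m<0$ on $\{X\leq 0\}$. Hence $\{Y>0\}=\{X>0\}$ and $Y^+=X^+$. Applying (i) to both $X$ and $Y$ yields $\DR(X)=\DR(X^+)=\DR(Y^+)=\DR(Y)$, which is (ii).

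For (ii)$\,\Rightarrow\,$(iii), I would note that if $X\geq 0$, then $\{X\leq 0\}=\{X=0\}$, so the identities in (ii) and in the definition of liquidity invariance coincide.

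For the main implication (iii)$\,\Rightarrow\,$(i), I would fix $X\in\Bb$ and first get $\DR(X)\leq\DR(X^+)$ from $X\leq X^+$ and the monotonicity of $\DR$. For the reverse inequality, I would apply liquidity invariance to the nonnegative function $X^+$: choosing any $m>\sup X^-$ (which is finite since $X$ is bounded; if $X\geq 0$, any $m>0$ will do), liquidity invariance gives
\[
\DR(X^+)=\DR\bigl(X^+-m\eins_{\{X^+=0\}}\bigr).
\]
Since $\{X^+=0\}=\{X\leq 0\}$, the function $Z:=X^+-m\eins_{\{X^+=0\}}$ equals $X$ on $\{X>0\}$ and equals $-m$ on $\{X\leq 0\}$; by the choice of $m$ we have $-m\leq -\sup X^-\leq X$ on $\{X\leq 0\}$, and therefore $Z\leq X$ everywhere. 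Monotonicity then yields $\DR(X^+)=\DR(Z)\leq\DR(X)$, completing the equivalence.

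The only mildly delicate point is the choice of $m$ in the last step, which must be strictly positive yet large enough that the dominated function $Z$ lies below $X$ on the set where $X\leq 0$; boundedness of $X$ is what makes this possible. Everything else is a direct computation with positive parts and monotonicity.
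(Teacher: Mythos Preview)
Your proof is correct and follows essentially the same cycle (i)$\Rightarrow$(ii)$\Rightarrow$(iii)$\Rightarrow$(i) as the paper, with the same key idea in the nontrivial step (iii)$\Rightarrow$(i): apply liquidity invariance to $X^+$ and use monotonicity on the resulting function, which lies below $X$. The only cosmetic difference is that the paper takes $m:=-\inf X$ (handling $X\geq 0$ separately), whereas you take any $m>\sup X^-$; both choices work for the same reason.
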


In the context of risk functionals, Property (i) in Lemma \ref{lem.X_plus} (with a different sign convention) is known under the names \textit{surplus invariance}, cf.\ \cite{MR4166752}, \textit{loss dependence}, cf.\ \cite{MR3081794}, and \textit{excess invariance}, cf.\ \cite{MR3017383}.\ A combination of Proposition \ref{prop.capacity_represenatation} and Lemma \ref{lem.X_plus} leads to the following theorem, which is the main result of this section. 

\begin{theorem} \label{thm.main} Let $\DR\colon \Bb\to [0,1]$ be a default risk measure. Then, the following conditions are equivalent.
\begin{enumerate}[(i)]
\item For all $X\in \Bb$, $$\DR(X)=\DR\big( \eins_{\{X>0\}} \big).$$
\item $\DR$ is \JS{default} scaling invariant with
\[
\DR(X)=\DR(X^+)=\inf_{\varepsilon >0}\DR\big(X+\varepsilon \eins_{\{X>0\}} \big)\quad\text{for all }X\in \Bb.
\]
\item $\DR$ is liquidity invariant and illiquidity invariant.
\end{enumerate}
\end{theorem}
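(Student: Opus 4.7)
The strategy is to reduce the three-way equivalence to the two previously established results: Proposition \ref{prop.capacity_represenatation}, which for $X\geq 0$ ties together illiquidity invariance, scaling invariance combined with the $\inf_\varepsilon$-condition, and the representation $\DR(X)=\DR(\eins_{\{X>0\}})$; and Lemma \ref{lem.X_plus}, which identifies liquidity invariance with the equation $\DR(X)=\DR(X^+)$ on all of $\Bb$. The main task is thus to lift the ``nonnegative case'' from Proposition \ref{prop.capacity_represenatation} to arbitrary $X\in \Bb$ using the reduction $\DR(X)=\DR(X^+)$.

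First I would verify the (arguably easiest) implication (i)$\,\Rightarrow\,$(ii). Given that $\DR$ depends only on $\{X>0\}$, the three set-identities $\{\lambda X>0\}=\{X>0\}$ for $\lambda>0$ with $X\geq 0$, $\{X^+>0\}=\{X>0\}$, and $\{X+\varepsilon\eins_{\{X>0\}}>0\}=\{X>0\}$ for every $\varepsilon>0$ immediately yield scaling invariance, the equality $\DR(X)=\DR(X^+)$, and the fact that $\DR(X+\varepsilon\eins_{\{X>0\}})=\DR(X)$ independently of $\varepsilon>0$, so the infimum in (ii) also equals $\DR(X)$.

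Next, for (ii)$\,\Rightarrow\,$(iii), the equality $\DR(X)=\DR(X^+)$ is precisely condition (i) of Lemma \ref{lem.X_plus}, and therefore gives liquidity invariance. Restricting the scaling invariance and the $\inf_\varepsilon$-statement in (ii) to $X\geq 0$ recovers exactly condition (i) of Proposition \ref{prop.capacity_represenatation}, whence illiquidity invariance follows.

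Finally, for (iii)$\,\Rightarrow\,$(i), illiquidity invariance together with Proposition \ref{prop.capacity_represenatation} applied to the nonnegative function $Y:=X^+$ yields $\DR(X^+)=\DR(\eins_{\{X^+>0\}})=\DR(\eins_{\{X>0\}})$, while liquidity invariance combined with Lemma \ref{lem.X_plus} gives $\DR(X)=\DR(X^+)$; concatenating these two equalities produces the desired representation. I do not anticipate a genuine obstacle: the two cited results carry essentially all of the analytic content, and what remains is the bookkeeping described above, which rests entirely on the set-identity $\{X^+>0\}=\{X>0\}$ to pass between the nonnegative case and general $X\in \Bb$.
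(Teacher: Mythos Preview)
Your proposal is correct and follows precisely the route the paper indicates: the paper does not spell out a proof but simply states that Theorem \ref{thm.main} is obtained by combining Proposition \ref{prop.capacity_represenatation} with Lemma \ref{lem.X_plus}, and your argument is exactly the natural way to carry out that combination. The only thing worth noting is that the paper leaves the bookkeeping implicit, whereas you make the three implications and the use of the set identity $\{X^+>0\}=\{X>0\}$ explicit; there is no difference in substance.
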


To get a link to Section \ref{sc:Choquet}, where the focus lies on the related capacities and their Choquet integrals,  we introduce the notion of \textit{submodularity} for default risk measures. In theoretical economics, submodularity is a classical property, which is closely related to substitute goods, cf.\ \cite{MR1614637}.

\begin{definition}
Let $C$ be a sublattice of $\Bb$, which contains all constant functions.\ A default risk measure $\DR\colon C\to [0,1]$ is called \textit{submodular} if
$$\DR(X\land Y)+\DR(X\lor Y) \leq \DR(X)+\DR(Y)\quad\text{for all }X,Y\in C.$$
\end{definition}

The natural question arises, whether submodularity is a sensible notion for default risk measures. For classical PDs, we have equality, i.e.,
$$\PD_\P(X\land Y)+\PD_\P(X\lor Y)=\PD_\P(X)+\PD_\P(Y)\quad\text{for all }X,Y\in \Bb.$$
Moreover, submodularity is given for certain classes of distorted PDs, for instance, if the distortion function is concave, see Section \ref{sec:distorted}. 

\begin{remark}\label{rem.submoldular}
Let $C$ be a sublattice of $\Bb$, which contains all constant functions, and $\DR\colon C\to [0,1]$ be a submodular default risk measure. Then,
\[
\DR(X)=\DR(X^+)\quad\text{for all }X\in C.
\]
In fact, let $X\in C$. The submodularity of $\DR$ together with $0\in C$ implies that
\[
\DR(X^+)+\DR(-X^-)\leq \DR(X).
\]
Since $-X^-\leq 0$, it follows that $\DR(-X^-)=0$ by the defining properties of a default risk measure. Therefore,
\[
\DR(X)\leq \DR(X^+)=\DR(X^+)+\DR(-X^-)\leq \DR(X).
\]
\end{remark}

As a consequence of Theorem \ref{thm.main} and Remark \ref{rem.submoldular}, we obtain the following corollary.

\begin{corollary} \label{cor.sublodular} 
Let $\DR\colon \Bb\to [0,1]$ be a submodular default risk measure.\ Then, the following statements are equivalent.
\begin{enumerate}[(i)]
\item $\DR$ is illiquidity invariant.
\item  For all $X\in \Bb$, $$\DR(X)=\DR\left( \eins_{\{X>0\}} \right).$$
\end{enumerate}
\end{corollary}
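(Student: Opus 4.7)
\begin{proof}
The implication (ii)$\,\Rightarrow\,$(i) is immediate from Theorem \ref{thm.main}, since any default risk measure satisfying $\DR(X)=\DR(\eins_{\{X>0\}})$ is both liquidity invariant and illiquidity invariant.

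For the converse (i)$\,\Rightarrow\,$(ii), the plan is to combine submodularity with illiquidity invariance in order to apply Theorem \ref{thm.main}. Since $\DR$ is submodular, Remark \ref{rem.submoldular} yields
\[
\DR(X)=\DR(X^+)\quad\text{for all }X\in \Bb.
\]
By Lemma \ref{lem.X_plus}, this is equivalent to $\DR$ being liquidity invariant. Together with the assumed illiquidity invariance, $\DR$ satisfies condition (iii) of Theorem \ref{thm.main}, and therefore also condition (i) of that theorem, which is exactly the representation
\[
\DR(X)=\DR\big(\eins_{\{X>0\}}\big)\quad\text{for all }X\in \Bb.
\]
\end{proof}

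The plan above is essentially immediate once Remark \ref{rem.submoldular}, Lemma \ref{lem.X_plus}, and Theorem \ref{thm.main} are in place. The only conceptual point worth noting is that the key role of submodularity is to supply, for free, the liquidity invariance that Theorem \ref{thm.main} requires alongside illiquidity invariance; without submodularity one would have to postulate both invariances separately. There is no real obstacle in carrying out the argument, as it amounts to chaining together the earlier implications in the correct order.
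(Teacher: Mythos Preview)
Your proof is correct and follows exactly the approach the paper intends: the corollary is stated there as a direct consequence of Theorem \ref{thm.main} and Remark \ref{rem.submoldular}, and your argument simply spells out this chain, invoking Lemma \ref{lem.X_plus} to translate $\DR(X)=\DR(X^+)$ into liquidity invariance so that condition (iii) of Theorem \ref{thm.main} applies.
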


\section{Choquet integrals and robust representations} \label{sc:Choquet}

In Section \ref{sec.properties}, we discussed equivalent conditions for a default risk measure $\DR\colon \Bb\to [0,1]$ in order to be the form \eqref{eq.representation}. Building on this representation, the aim of this section is to connect default risk measures with Choquet integrals and monetary risk measures \JS{and use this connection to attain a representation for $\DR$ via probability measures}. Recall that a \textit{capacity} is a map $c\colon \FF\to [0,1]$ with $c(\emptyset)=0$, $c(\Omega)=1$, and $c(A)\leq c(B)$ for all $A,B\in \FF$ with $A\subset B$. We start with the following observation.

\begin{remark}\label{rem.choquet}
Let $\DR\colon \Bb\to [0,1]$ be a default risk measure.\ Then, we can define a capacity $c\colon \FF \to [0,1]$ by \begin{equation}\label{eq.capacity}
c(A):=\DR\left(\eins_{A}\right)\quad\text{for all }A\in \FF.
\end{equation}
By definition of a default risk measure,
$$c(\emptyset)=c(\eins_\emptyset)=\DR(0)=0 \quad\text{and}\quad c(\Omega)=c(\eins_\Omega)=\DR(1)=1.$$ Moreover, for all $A,B\in \FF$ with $A\subset B$, 
$$c(A)=\DR\left(\eins_{A}\right) \leq \DR\left(\eins_{B}\right)=c(B)$$
due to the monotonicity of $\DR$. For $X\in \Bb$, the \textit{Choquet integral} with respect to $c$ is defined as 
$$\int X \, \d c:=\int_{-\infty}^{0}\big(c( \{X>s\})-1 \big)\, \d s+\int_{0}^{\infty}c(\{X>s\})\,\d s.$$
Although the Choquet integral is, in general, not a linear functional, it defines a monetary risk measure $R\colon \Bb\to \R$ via
\[
R(X):=\int X\, \d c\quad\text{for all }X\in \Bb.
\]
By definition of the Choquet integral, the monetary risk measure $R$ is positively homogeneous, i.e., $R(\lambda X)=\lambda R(X)$ for all $X\in \Bb$ and $\lambda>0$. If $\DR$ satisfies \eqref{eq.representation}, then
\[
\DR(X)=R\big(\eins_{\{X>0\}}\big)\quad\text{for all }X\in \Bb,
\]
which leads to the following proposition.
\end{remark}

\begin{proposition} \label{prop. sc 5}
    Let $\DR\colon \Bb\to [0,1]$ be a default risk measure. Then, the following two statements are equivalent.
\begin{enumerate}[(i)]
\item For all $X\in \Bb$, 
\[
 \DR(X)=\DR\big(\eins_{\{X>0\}}\big).
\]
 \item There exists a positively homogeneous \JS{monetary} risk measure $R\colon \Bb\to \R$\JS{, cf.\ Definition \ref{def.riskmeasure},} with
 \[
 \DR(X)=R\big(\eins_{\{X>0\}}\big)\quad \text{for all }X\in \Bb.
 \]
\end{enumerate}
\end{proposition}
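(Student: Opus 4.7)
The plan is to prove the two implications independently. For (i) $\Rightarrow$ (ii), I would take advantage of the Choquet integral construction spelled out in Remark \ref{rem.choquet}; for (ii) $\Rightarrow$ (i), the argument reduces to the idempotency of the indicator operation and requires essentially no work.

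For the forward implication, starting from (i), I would define the capacity $c(A):=\DR(\eins_A)$ for $A\in \FF$ and let $R\colon \Bb\to \R$ be its Choquet integral, $R(X):=\int X\,\d c$. By Remark \ref{rem.choquet}, $R$ is a positively homogeneous monetary risk measure. The key computation is that the Choquet integral of an indicator recovers the capacity: unpacking the definition with $X=\eins_A$, the level set $\{\eins_A>s\}$ equals $\Omega$ for $s<0$, equals $A$ for $0\leq s<1$, and is empty for $s\geq 1$, so a direct evaluation of the two integrals defining $\int \eins_A\,\d c$ yields $\int \eins_A\,\d c=c(A)$. Setting $A=\{X>0\}$ and invoking (i) then gives
\[
R\bigl(\eins_{\{X>0\}}\bigr)=c\bigl(\{X>0\}\bigr)=\DR\bigl(\eins_{\{X>0\}}\bigr)=\DR(X),
\]
which is the representation required in (ii).

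For the reverse implication, assume (ii) and observe the tautology $\{\eins_{\{X>0\}}>0\}=\{X>0\}$ for every $X\in \Bb$. Applying the identity in (ii) with $\eins_{\{X>0\}}$ in place of $X$ therefore gives
\[
\DR\bigl(\eins_{\{X>0\}}\bigr)=R\bigl(\eins_{\{\eins_{\{X>0\}}>0\}}\bigr)=R\bigl(\eins_{\{X>0\}}\bigr)=\DR(X),
\]
which is (i). Note that this direction uses neither positive homogeneity nor any other structural property of $R$; the actual content of the proposition lies in the forward direction, where a canonical positively homogeneous representative must be produced, and the Choquet integral construction supplies one. I do not expect any genuine obstacle: the only step requiring verification is the identity $\int \eins_A\,\d c=c(A)$, which is immediate from the level-set decomposition above.
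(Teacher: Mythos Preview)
Your proof is correct and follows the same approach as the paper: the Choquet integral construction from Remark~\ref{rem.choquet} supplies the positively homogeneous $R$ for one direction, and the idempotency $\{\eins_{\{X>0\}}>0\}=\{X>0\}$ handles the other. The paper's one-line proof appears to have the labels of the two implications interchanged, but the substantive content matches yours.
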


 Let $\DR\colon \Bb\to [0,1]$ be a default risk measure, which satisfies \eqref{eq.representation}. Then, $\DR$ is submodular if and only if the related capacity $c\colon \FF\to [0,1]$, given by \eqref{eq.capacity}, is \textit{2-alternating}, i.e.,
 \[
 c(A\cup B)+c(A\cap B)\leq c(A)+c(B)\quad\text{for all }A,B\in \FF.
 \]
 A well-known fact is that a capacity is 2-alternating if and only if the related Choquet integral defines a coherent risk measure.\ We recall that a monetary risk measure $R\colon \Bb\to \R$ is coherent if and only if there exists a nonempty set $\cP$ of finitely additive probability measures such that
\[
R(X)=\sup_{\Q\in \cP}\E_\Q(X)\quad\text{for all }X\in \Bb,
\]
cf.\ \cite{MR3859905}. In this case, we obtain the robust representation
\begin{equation}\label{eq.integralrep}
\DR(X)=\int \eins_{\{X>0\}}\,{\rm d}c=\sup_{\Q\in \cP}\E_\Q( \eins_{\{X>0\}})=\sup_{\Q\in \cP}\Q(X>0)\quad\text{for all }X\in \Bb.
\end{equation}
We now investigate additional continuity properties for $\DR$ that guarantee a representation via countably additive probability measures.\ In the sequel, for a sequence $(X_n)_{n\in \N}\subset \Bb$ and $X\in \Bb$, we write $X_n\nearrow X$ or $X_n\searrow X$ as $n\to \infty$ if $X_n\leq X_{n+1}$ or $X_n\geq X_{n+1}$ for all $n\in \N$ and $X(\omega)=\lim_{n\to \infty}X_n(\omega)$ for all $\omega\in \Omega$, respectively.

\begin{proposition}\label{prop.contbelow}
    Let $\DR\colon \Bb\to [0,1]$ be a default risk measure with $\DR(X)=\DR\left( \eins_{\{X>0\}} \right)$ for all $X\in \Bb$ and $c(A):=\DR(\eins_A)$ for all $A\in \FF$.\ Then, the following statements are equivalent.
    \begin{enumerate}[(i)]
     \item $\DR$ is \textit{continuous from below}, i.e., for every sequence $(X_n)_{n\in \N}\subset \Bb$ with $X_n \nearrow X\in \Bb$ as $n\to \infty$,
$$
\DR(X)=\lim_{n\to \infty} \DR\left( X_n \right).
$$
\item For every sequence $(A_n)_{n\in \N}\subset \FF$ with $A_n\subset A_{n+1}$ for all $n\in \N$ 
$$
c\bigg(\bigcup_{n\in \N}A_n\bigg)=\lim_{n\to \infty} c(A_n).
$$
\item For every sequence $(X_n)_{n\in \N}\subset \Bb$ with $X_n \nearrow X\in \Bb$ as $n\to \infty$,
$$
\int X\, \d c=\lim_{n\to \infty} \int X_n\, \d c.
$$
 \end{enumerate}
\end{proposition}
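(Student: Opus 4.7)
The plan is to prove the chain (i) $\Leftrightarrow$ (ii) $\Leftrightarrow$ (iii), leveraging the standing assumption $\DR(X)=c(\{X>0\})$ and the monotone structure of the Choquet integral. The pivotal elementary observation, which I would state once at the outset and reuse in every implication, is that for any sequence $X_n\nearrow X$ in $\Bb$ and any $s\in \R$, the level sets satisfy $\{X_n>s\}\subset \{X_{n+1}>s\}$ and $\bigcup_{n\in\N}\{X_n>s\}=\{X>s\}$. This follows since $X_n(\omega)\to X(\omega)$ with $X_n(\omega)\le X(\omega)$: if $X(\omega)>s$ then eventually $X_n(\omega)>s$, and conversely any $X_n(\omega)>s$ forces $X(\omega)>s$.

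For (i) $\Leftrightarrow$ (ii), the implication (i) $\Rightarrow$ (ii) comes from applying (i) to $X_n:=\eins_{A_n}$ and $X:=\eins_{\bigcup_n A_n}$, which form an increasing sequence converging pointwise, and using $c(A)=\DR(\eins_A)$. Conversely, given $X_n\nearrow X$, the standing hypothesis yields $\DR(X_n)=c(\{X_n>0\})$ and $\DR(X)=c(\{X>0\})$, and the level-set observation at $s=0$ together with (ii) gives $\DR(X_n)\to \DR(X)$.

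For (ii) $\Leftrightarrow$ (iii), the direction (iii) $\Rightarrow$ (ii) is again immediate by taking $X_n=\eins_{A_n}$ and noting $\int \eins_A\,\d c=c(A)$. The content lies in (ii) $\Rightarrow$ (iii): given $X_n\nearrow X$ in $\Bb$, I would use the defining formula
\[
\int X_n\,\d c=\int_{-\infty}^0\big(c(\{X_n>s\})-1\big)\,\d s+\int_0^\infty c(\{X_n>s\})\,\d s,
\]
together with uniform boundedness (there exists $M>0$ with $|X_n|,|X|\le M$, so the integrands vanish outside $[-M,M]$ and the integral over $(-\infty,-M]$ is constant while the integral over $[M,\infty)$ vanishes). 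By the level-set observation, for each $s$ the sequence $c(\{X_n>s\})$ is increasing with limit $c(\{X>s\})$ thanks to (ii). Then the Lebesgue monotone convergence theorem, applied separately on $(-M,0)$ and $(0,M)$ (with the integrands dominated by constants in $[0,1]$ and $[-1,0]$ respectively), passes the limit inside both integrals and yields $\int X_n\,\d c\to \int X\,\d c$.

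The only mildly technical step is this last monotone convergence passage for the Choquet integral, and even there the boundedness of the $X_n$ reduces matters to elementary Lebesgue theory; the rest of the argument is a direct bookkeeping between $\DR$, $c$, and their level sets, so I expect no serious obstacle.
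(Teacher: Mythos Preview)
Your proof is correct and follows essentially the same approach as the paper: the level-set observation $\{X_n>s\}\nearrow\{X>s\}$ together with the identity $\DR(Y)=c(\{Y>0\})$ handles the equivalence of (i) and (ii), and monotone convergence applied to the Choquet integral handles (iii). The only organizational difference is that the paper runs the cycle (iii)$\Rightarrow$(ii)$\Rightarrow$(i)$\Rightarrow$(iii) (using (i) rather than (ii) to get pointwise convergence of $s\mapsto c(\{X_n>s\})=\DR(X_n-s)$ before invoking monotone convergence), whereas you prove the two equivalences separately; this is immaterial.
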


Unfortunately, even if $\DR$ is submodular, continuity from below of $\DR$ is, in general, not a sufficient condition in order to guarantee a representation via countably additive probability measures on $\Bb$ \JS{as the following example shows}.

\begin{example}
\JS{Let $\P\colon \mathcal B\to [0,1]$ be the Lebesgue measure defined on the Borel $\sigma$-algebra $\mathcal B$ of the closed interval $\Omega:=[0,1]$ and $\FF$ be the power set of $\Omega$. Let $\mathcal L$ denote the space of all bounded $\mathcal B$-measurable functions $\Omega\to \R$ and
\[
R(X):=\inf\{\E_\P(X_0) |\, X_0\in \mathcal L, \, X_0\geq X\}.
\]
Then, by \cite[Proposition 2.2 and Lemma 3.6]{dkn}, $R$ is a coherent risk measure, which is continuous from below and the maximal extension of $\E_\P$ to $\Bb$.\ Consider the capacity $c\colon \FF\to [0,1]$, given by
\[
c(A):=R(\eins_A)\quad\text{for all }A\in \FF,
\]
and $\DR(X):=c(\{X>0\})$ for all $X\in \Bb$.\ Let $\Q\colon \FF\to [0,1]$ be a finitely additive probability with
\begin{equation}\label{eq.counterex}
\E_\Q(X)\leq R(X)\quad \text{for all }X\in \Bb.
\end{equation}
Then, for all $X\in \Bb$ with $X\geq 0$,
\[
\int X\, \d c= \int_0^\infty c(\{X>s\})\, \d s\geq \int_0^\infty \Q(\{X>s\})\, \d s=\E_\Q(X).
\]
Moreover, for $X\in \mathcal L$ with $X\geq 0$,
\[
\int X\, \d c= \int_0^\infty c(\{X>s\})\, \d s= \int_0^\infty \P(\{X>s\})\, \d s=\E_\P(X).
\]
Hence, by the maximality of $R$, it follows that $\int X\, \d c\leq R(X)$ for all $X\in \Bb$. We have therefore shown that $R(X)=\int X\, \d c$ for all $X\in \Bb$. In particular, $\DR$ is submodular. On the other hand, assuming the continuum hypothesis, by \cite[Satz 1C]{bierlein}, there exists not a single countably additive probability measure $\Q\colon \FF\to [0,1]$ with \eqref{eq.counterex}. In fact, with a similar construction, using a probability measure which is different from the Lebesgue measure, one can avoid invoking the continuum hypothesis, see \cite[Example 3.7]{dkn} for further details.}
\end{example}

If, however, $\DR$ is a distorted PD, which is submodular, continuity from below is a sufficient condition in order to allow for a representation in terms of countably additive probability measures as the following remark discusses.

\begin{remark}\label{rem.distpdcontbelow}
Let $\P$ be a probability measure of $\FF$ and $T\colon [0,1]\to [0,1]$ be a nondecreasing function with $T(0)=0$ and $T(1)=1$. Let $\DR\colon \Bb\to [0,1]$ be given by
\[
\DR(X)=T\big(\PD_\P(X)\big)\quad \text{for all }X\in \Bb,
\]
and assume that $\DR$ is submodular, so that $c\colon \FF\to [0,1],\; A\mapsto \DR(\eins_A)$ is a $2$-alternating capacity.
\begin{enumerate}[a)]
\item Assume that $T$ is left-continuous or, equivalently, lower semicontinuous, i.e.,
\[
T(p)=\sup_{q\in (0,p)}T(q)\quad\text{for all }p\in (0,1].
\]
Then, by Proposition \ref{prop.contbelow}, the Choquet integral w.r.t.\ $c$ is a coherent and law-invariant risk measure, which is continuous from below.\ Hence, by \cite[Theorem 4.33]{MR3859905}, there exists a set of countably additive and (w.r.t.\ $\P$) absolutely continuous probability measures $\cP$ on $\FF$ with
\[
\DR(X)=\sup_{\Q\in \cP}\PD_\Q(X)\quad\text{for all }X\in \Bb.
\]
\item[b)] Now, assume that $(\Omega,\FF,\P)$ is atomless.\ Then, the Choquet integral w.r.t.\ $c$ is a coherent and law-invariant risk measure on an atomless probability space.\ By \cite[Theorem 2.1]{MR2277714}, it follows that the Choquet integral w.r.t.\ $c$ is continuous from below. Hence, by \cite[Theorem 4.33]{MR3859905}, there exists a set of countably additive and (w.r.t.\ $\P$) absolutely continuous probability measures $\cP$ on $\FF$ with
\[
\DR(X)=\sup_{\Q\in \cP}\PD_\Q(X)\quad\text{for all }X\in \Bb.
\]
\end{enumerate}
\end{remark}

If $\Omega$ is a Polish space and $\mathcal F$ is the Borel $\sigma$-algebra, we have the following characterization for general submodular default risk measures on the space $\Lb$ of all bounded lower semicontinuous functions $\Om\to \R$.

\begin{proposition} \label{prop. polish}
Let $\Omega$ be a Polish space and $\DR\colon \Lb \to [0,1]$ be a submodular default risk measure with
\begin{equation}\label{eq.replb}
\DR(X)=\DR\big(\eins_{\{X>0\}}\big)\quad\text{for all }X\in \Lb\text{ with }X\geq 0.
\end{equation}
Then, the following statements are equivalent.
\begin{enumerate}[(i)]
\item $\DR$ is continuous from below, i.e.,
\[
\DR(X)=\lim_{n\to \infty}\DR(X_n)
\]
for every sequence $(X_n)_{n\in \N}\subset \Lb$ with $X_n\nearrow X\in \Lb$ as $n\to \infty$.
\item There exists a nonempty set $\cP$ of probability measures on $(\Omega,\FF)$ such that
\[
\DR(X)=\sup_{\Q\in \cP}\Q(X>0)\quad\text{for all }X\in \Lb.
\]
\end{enumerate}
\end{proposition}

\begin{remark}
Let $\DR\colon \Bb\to [0,1]$ be a default risk measure.\ Then, \textit{continuity from above} in the sense that
$$\DR(X)=\lim_{n\to \infty}\DR(X_n)$$
for every sequence $(X_n)_{n\in \N}\subset \Bb$ with $X_n\searrow X\in \Bb$ as $n\to \infty$, is \textit{not} a sensible property.
Considering a sequence $\left(X_n \right)_{n\in \N} \subset \Bb$ defined by $X_n=\frac{1}{n}$, it follows that $\DR(X_n)=1$, while $\DR(0)=0$.\ However, by \eqref{eq.integralrep}, it is enough to express the Choquet integral via a set of (\JS{countably} additive) probability measures. We can therefore weaken the requirement of continuity from above and obtain the following proposition.
\end{remark}

\begin{proposition}\label{prop.contabove}
    Let $\DR\colon \Bb\to [0,1]$ be a default risk measure with $\DR(X)=\DR\left( \eins_{\{X>0\}} \right)$ for all $X\in \Bb$ and $c(A):=\DR(\eins_A)$ for all $A\in \FF$.\ Then, the following statements are equivalent.
    \begin{enumerate}[(i)]
     \item For every sequence $(X_n)_{n\in \N}\subset \Bb$ with $X_n \searrow 0$ as $n\to \infty$ and all $\varepsilon >0$,
$$
\lim_{n\to \infty} \DR\left( X_n- \varepsilon \right)=0.
$$
\item For every sequence $(A_n)_{n\in \N}\subset \FF$ with $A_{n+1}\subset A_n$ for all $n\in \N$ and $\bigcap_{n\in \N} A_n=\emptyset$,
$$
\lim_{n\to \infty} \DR\left( \eins_{A_n}\right)=0.
$$
\item For every sequence $(X_n)_{n\in \N}\subset \Bb$ with $X_n \searrow 0$ as $n\to \infty$,
$$
\lim_{n\to \infty} \int X_n\, \d c=0.
$$
 \end{enumerate}
 If $\DR$ is additionally submodular, either of these conditions implies that there exists a nonempty set $\cP$ of probability measures on $(\Omega,\FF)$ with
 \[
 \DR(X)=\max_{\Q\in \cP}\Q(X>0)\quad \text{for all }X\in \Bb.
 \] 
\end{proposition}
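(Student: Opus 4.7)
My plan is to exploit the standing hypothesis $\DR(X)=c(\{X>0\})$ together with the layer-cake representation $\int Y\,\d c=\int_0^\infty c(\{Y>s\})\,\d s$ (valid for $Y\in\Bb$ with $Y\geq 0$) in order to reduce each of the three conditions to a statement about $c$ evaluated on decreasing sequences of sets.

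For (i)$\,\Leftrightarrow\,$(ii), the key observation is the identity $\DR(X_n-\varepsilon)=\DR(\eins_{\{X_n>\varepsilon\}})=c(\{X_n>\varepsilon\})$. Assuming (ii) and $X_n\searrow 0$, the sets $A_n:=\{X_n>\varepsilon\}$ are decreasing with empty intersection (otherwise some $\omega$ would satisfy $X_n(\omega)>\varepsilon$ for every $n$, contradicting $X_n(\omega)\to 0$), so (ii) forces $c(A_n)\to 0$. Conversely, given $A_n\searrow\emptyset$, I set $X_n:=\eins_{A_n}$; then $X_n\searrow 0$, and choosing $\varepsilon=\tfrac12$ yields $\DR(X_n-\tfrac12)=c(A_n)\to 0$ by (i).

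For (ii)$\,\Leftrightarrow\,$(iii), the easy direction (iii)$\,\Rightarrow\,$(ii) comes from $\int\eins_{A_n}\,\d c=c(A_n)$. For (ii)$\,\Rightarrow\,$(iii), given $X_n\searrow 0$ I set $M:=\sup X_1$ and use
\[
\int X_n\,\d c=\int_0^{M}c(\{X_n>s\})\,\d s.
\]
For each fixed $s>0$, the layer sets $\{X_n>s\}$ decrease to $\emptyset$ by the argument above, so (ii) gives pointwise convergence of the integrand to zero. Since the integrand is bounded by $1$ on the compact interval $[0,M]$, dominated convergence delivers $\int X_n\,\d c\to 0$.

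For the final assertion under submodularity, $c$ becomes $2$-alternating, and Schmeidler's representation theorem for $2$-alternating capacities yields $\int Y\,\d c=\max_{\Q\in\mathrm{core}(c)}\E_\Q(Y)$ for every $Y\in\Bb$, where $\mathrm{core}(c)=\{\Q\text{ finitely additive probability}\colon \Q(A)\leq c(A)\text{ for all }A\in\FF\}$. The only step requiring care is promoting these finitely additive $\Q$ to $\sigma$-additive ones, and this is precisely what (ii) delivers: for $A_n\searrow\emptyset$ and any $\Q\in\mathrm{core}(c)$, one has $\Q(A_n)\leq c(A_n)\to 0$, so $\Q$ is continuous from above at $\emptyset$ and therefore $\sigma$-additive. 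Taking $\cP:=\mathrm{core}(c)$ and evaluating at $Y=\eins_{\{X>0\}}$ then yields $\DR(X)=c(\{X>0\})=\max_{\Q\in\cP}\Q(X>0)$, as desired.
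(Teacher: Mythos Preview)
Your proof is correct and follows essentially the same approach as the paper: the paper argues via the cycle (i)$\Rightarrow$(iii)$\Rightarrow$(ii)$\Rightarrow$(i) using the identity $\DR(X_n-s)=c(\{X_n>s\})$ together with monotone (equivalently, dominated) convergence for the layer-cake integral, while you establish the pairwise equivalences (i)$\Leftrightarrow$(ii) and (ii)$\Leftrightarrow$(iii) using the same ingredients. For the final assertion the paper simply cites the standard theory of coherent risk measures, whereas you spell out the argument via Schmeidler's representation and the $\sigma$-additivity of the dominated finitely additive measures; this is a welcome elaboration but not a different route.
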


\begin{example}
Let $\P$ be a probability measure on $\FF$ and $T\colon [0,1]\to [0,1]$ be nondecreasing with $T(0)=0$ and $T(1)=1$.\ Then, the default risk measure $\DR\colon \Bb\to [0,1]$, given by
\[
\DR(X):=T\big(\PD_\P(X)\big)\quad\text{for all }X\in \Bb,
\]
satisfies Property (ii) in Proposition \ref{prop.contabove} if $\inf_{p>0}T(p)=0$.\ In case there exists a sequence $(A_n)_{n\in \N}\subset \FF$ with \JS{$\emptyset \neq A_{n+1}\subset A_n$ for all $n\in \N$} and $\bigcap_{n\in \N} A_n=\emptyset$, this is also a necessary condition for $\DR$ to satisfy Property (ii) in Proposition \ref{prop.contabove} \JS{using the continuity from above of $\P$}.
\end{example}

\section{Law-invariant default risk measures and distorted PDs}\label{sec:distorted}

 Let $C$ be a set of customers $C\subset \Bb$, which contains the set of all constant functions, and satisfies $\eins_{\{X>0\}}\in C$ for all $X\in C$.\ In this section, we fix a reference probability measure $\P$ on $\FF$, and specialize on \textit{law-invariant} default risk measures $\DR\colon C\to [0,1]$, i.e., $\DR(X)=\DR(Y)$ whenever $X\in C$ and $Y\in C$ have the same distribution under $\P$. In rating systems, customers in the same rating class are considered to be statistically identical in terms of their default behaviour.\ Hence, when choosing a default risk measure in order to quantify the probability of default including uncertainty, it makes sense to require law-invariance. Since customers are usually divided into rating classes and not every default probability is realized, we also consider the case where $(\Omega,\FF,\P)$ is \textit{not} atomless. 
 We start with several characterizations of law-invariance, which do not hinge on the standard assumption of an atomless probability space, before we switch to an atomless setting in order to derive finer properties and representations of law-invariant default risk measures.
 
 In the sequel, we say that a function $T\colon [0,1]\to [0,1]$ is a \textit{distortion function} if $T(0)=0$ and $T(1)=1$. The following theorem adopts an argument from Wang et al.\ \cite[Proof of Theorem 2]{MR1604936}, and provides a characterization of distorted PDs.

\begin{theorem} \label{Theorem: law inv, Dist}
 Let $\DR\colon C\to [0,1]$ be a default risk measure.\ Then, the following statements are equivalent.
\begin{enumerate}
\item[(i)] $\DR$ is law-invariant and $\DR(X)=\DR\big(\eins_{\{X>0\}}\big)$ for all $X\in C$.
\item[(ii)] There exists a distortion function $T\colon [0,1]\to [0,1]$ with 
\[
\DR(X)=T\big(\PD_\P(X)\big)\quad\text{for all }X\in C.
\]
\end{enumerate}
\end{theorem}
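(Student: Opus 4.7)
The plan is to prove both implications, with essentially all of the content lying in the forward direction (i)$\,\Rightarrow\,$(ii). For the easy direction (ii)$\,\Rightarrow\,$(i), I would simply verify the two conditions in (i) directly: if $X, Y \in C$ are equidistributed under $\P$, then $\P(X>0) = \P(Y>0)$ and hence $\DR(X) = T(\P(X>0)) = T(\P(Y>0)) = \DR(Y)$, giving law-invariance; while $\{\eins_{\{X>0\}} > 0\} = \{X>0\}$ yields $\DR(\eins_{\{X>0\}}) = T(\P(X>0)) = \DR(X)$, giving the second property.

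For (i)$\,\Rightarrow\,$(ii), the main step is to construct the distortion function $T$. The standing assumption $\eins_{\{X>0\}} \in C$ combined with law-invariance is the crucial ingredient: whenever $X, Y \in C$ satisfy $\P(X>0) = \P(Y>0) =: p$, the indicators $\eins_{\{X>0\}}$ and $\eins_{\{Y>0\}}$ are both Bernoulli$(p)$-distributed under $\P$, hence equidistributed, so law-invariance yields $\DR(\eins_{\{X>0\}}) = \DR(\eins_{\{Y>0\}})$. This permits the well-defined assignment
$$T(p) := \DR(\eins_{\{X>0\}}) \quad \text{whenever } p = \P(X>0) \text{ for some } X \in C,$$
and $T$ may be extended to the rest of $[0,1]$ arbitrarily. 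The boundary conditions come for free: the constants $0, 1 \in C$ give, via the axioms of a default risk measure, $T(0) = \DR(0) = 0$ and $T(1) = \DR(1) = 1$. With $T$ in hand, hypothesis (i) closes the argument immediately, since for every $X \in C$,
$$\DR(X) = \DR(\eins_{\{X>0\}}) = T(\P(X>0)) = T(\PD_\P(X)).$$

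The main obstacle, such as it is, lies in ensuring that the well-definedness of $T$ does not secretly require additional structure on $(\Omega, \FF, \P)$. In particular, no atomlessness assumption is needed at this stage, because I only need to define $T$ on the range $\{\P(X>0) : X \in C\}$ actually attained by $\PD_\P$ on $C$, and may extend freely elsewhere. Likewise, no monotonicity of $T$ has to be enforced, since in the paper's convention a distortion function is only subject to the two normalization conditions $T(0)=0$ and $T(1)=1$. The mild hypothesis that $C$ is closed under $X \mapsto \eins_{\{X>0\}}$ is precisely what guarantees that all default indicators lie in the domain of $\DR$, which is what enables the reduction to the Bernoulli case.
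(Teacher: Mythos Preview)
Your proof is correct and follows essentially the same route as the paper's: define $T$ on the attained set $P=\{\P(X>0):X\in C\}$ via $T(p):=\DR(\eins_{\{X>0\}})$, check the boundary values using the constants $0,1\in C$, and extend to the rest of $[0,1]$. The only cosmetic difference is that the paper extends $T$ off $P$ by the specific rule $T(p):=\DR(I_{q_p})$ with $q_p:=\sup([0,p]\cap P)$, whereas you extend arbitrarily; since the extension plays no role in the conclusion, both are equally valid, and your explicit verification of well-definedness via the Bernoulli$(p)$ law of the indicators is in fact more careful than the paper's presentation.
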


A natural question that arises is whether the distortion function $T$ in Theorem \ref{Theorem: law inv, Dist} is nondecreasing or, in other words, if the default risk measure $\DR\colon C\to [0,1]$ is \textit{consistent} with $\P$, i.e., 
\[
\DR(X)\leq \DR(Y) \quad \text{for all }X,Y\in C \text{ with } \PD_\P(X)\leq \PD_\P(Y). 
\]
This property is very natural since one would expect the default risk of $X$ to be smaller than the default risk of $Y$ if the PD of $X$ is smaller than the PD of $Y$. In the situation of Theorem \ref{Theorem: law inv, Dist}, it is, however, possible that the distortion function $T$ is \textit{not} monotone as the following simple example shows.

\begin{example} \label{ex. decreasing dist}
Let $\Omega=\{0,1\}$, $\FF$ be the power set, and $C$ consist of all constants and the two functions $X:=\eins_{\{0\}}$ and $Y:=\eins_{\{1\}}$. Assume that
\[
0<p:=\P(X>0)<\P(Y>0)=:q<1\text{ and } 0<\DR(Y)<\DR(X)<1.
\]
Then $\DR$ is a default risk measure, which satisfies Property (i) in Theorem \ref{Theorem: law inv, Dist} but, for any distortion function $T\colon [0,1]\to [0,1]$ with
\[
\DR(Z)=T\big(\PD_\P(Z)\big)\quad\text{for all }Z\in C,
\]
it follows that $T(p)=\DR(X)>\DR(Y)=T(q)$, so that $T$ is not monotone. 
\end{example}

We now aim towards a characterization in terms of a nondecreasing distortion function. The proof of Theorem \ref{thm.extension} indicates that the set
\begin{equation}\label{eq.defsetP}
P:=\big\{p\in [0,1]\, \big|\, \exists X\in C\colon \P(X>0)=p\big\}. 
\end{equation}
plays a fundamental role for the monotonicity of the distortion function.
\begin{definition}
We say $C$ contains an \textit{ordered subset} if there exists a family $\big( X_p \big)_{p\in P}$ with $\PD_\P(X_p)=p$ for all $p\in P$ and $\{X_{p} >0 \} \subset \{X_{q} >0 \}$ for all $p,q\in P$ with $p\leq q$.
\end{definition}
Clearly, if $(\Omega,\FF,\P)$ is atomless and $C=\Bb$, then $C$ contains an ordered subset.

\begin{theorem} \label{thm. eq dist. PD}
 Let $\DR\colon C\to [0,1]$ be a default risk measure, and assume that $C$ contains an ordered subset.\ Then, the following statements are equivalent.
\begin{enumerate}
\item[(i)] $\DR$ is law-invariant and $\DR(X)=\DR\big(\eins_{\{X>0\}}\big)$ for all $X\in C$.
\item[(ii)] There exists a nondecreasing distortion function $T\colon [0,1]\to [0,1]$ with
\[
\DR(X)=T\big(\PD_\P(X)\big)\quad\text{for all }X\in C.
\]
\end{enumerate}
\end{theorem}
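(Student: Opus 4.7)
The direction (ii)$\,\Rightarrow\,$(i) is immediate: $\PD_\P$ is law-invariant and satisfies $\PD_\P(X) = \PD_\P(\eins_{\{X>0\}})$, hence composing with any distortion function $T$ preserves both properties. The substantive direction is (i)$\,\Rightarrow\,$(ii), which I plan to prove by sharpening the construction used in Theorem \ref{Theorem: law inv, Dist}; indeed, that result already yields \emph{some} distortion function $T$ with $\DR(X) = T(\PD_\P(X))$, so the real task is to exploit the ordered subset in order to guarantee that $T$ can be chosen nondecreasing.

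Let $P$ be as in \eqref{eq.defsetP} and fix the ordered family $(X_p)_{p\in P} \subset C$, so that $\PD_\P(X_p) = p$ and $\{X_p > 0\} \subset \{X_q > 0\}$ whenever $p \leq q$; note $0, 1 \in P$ since $C$ contains all constants. I define $T_0\colon P \to [0,1]$ by $T_0(p) := \DR(X_p)$. The crucial observation is that, for $p \leq q$ in $P$, the inclusion of default events yields $\eins_{\{X_p > 0\}} \leq \eins_{\{X_q > 0\}}$ pointwise on $\Omega$, so by monotonicity of $\DR$ together with (i),
\[
T_0(p) = \DR(\eins_{\{X_p > 0\}}) \leq \DR(\eins_{\{X_q > 0\}}) = T_0(q).
\]
Moreover, $\eins_{\{X_0 > 0\}}$ and $\eins_{\{X_1 > 0\}}$ are $\P$-almost surely equal to $0$ and $1$, respectively, so by law-invariance and (i), $T_0(0) = \DR(0) = 0$ and $T_0(1) = \DR(1) = 1$.

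It remains to extend $T_0$ to a nondecreasing distortion function on $[0,1]$ and to verify the representation on all of $C$. I set
\[
T(p) := \sup\big\{T_0(q) \,\big|\, q \in P,\, q \leq p\big\}\quad\text{for }p\in [0,1],
\]
with the convention $\sup\emptyset := 0$; this $T$ is automatically nondecreasing, satisfies $T(0) = 0$ and $T(1) = 1$, and by the monotonicity of $T_0$ on $P$ it coincides with $T_0$ on $P$. For any $X \in C$, setting $p := \PD_\P(X) \in P$, the random variables $\eins_{\{X > 0\}}$ and $\eins_{\{X_p > 0\}}$ are both Bernoulli$(p)$-distributed under $\P$, so law-invariance combined with (i) yields
\[
\DR(X) = \DR(\eins_{\{X > 0\}}) = \DR(\eins_{\{X_p > 0\}}) = \DR(X_p) = T_0(p) = T(\PD_\P(X)).
\]
The only genuine pitfall of the argument is the potential non-monotonicity of a na\"ively defined distortion, as exhibited by Example \ref{ex. decreasing dist}; the ordered subset assumption is precisely what rules this out, converting a comparison of PDs into a pointwise comparison of indicators to which the monotonicity of $\DR$ can be applied.
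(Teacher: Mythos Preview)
Your proof is correct and follows essentially the same route as the paper's: both use the ordered subset to convert the inequality $p\leq q$ in $P$ into a pointwise inequality of indicators, then apply monotonicity of $\DR$ to conclude $T_0(p)\leq T_0(q)$. The paper's version is slightly terser, invoking Theorem~\ref{Theorem: law inv, Dist} for existence of some distortion and then checking monotonicity only on $P$, whereas you give a self-contained construction with an explicit monotone extension to all of $[0,1]$; this makes your argument a bit more complete, but the substance is the same.
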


\begin{remark}
Under the assumptions of Theorem \ref{thm. eq dist. PD}, including one of the equivalences, the default risk measure $\DR$ can be extended to a law-invariant default risk measure $\overline \DR\colon \Bb\to [0,1]$ by means of the distortion function $T$. The extension $\overline \DR$ is given by
$$\overline \DR(X):=T\big(\PD_\P(X)\big)\quad\text{for all }X\in \Bb.$$ Obviously, $\overline \DR(X)=\DR(X)$ for all $X\in C$ and $\overline \DR$ is a default risk measure, cf.\ Example \ref{Distorted PD}.\ Note that a default risk measure can, in general, not be extended using a nonmonotone distortion function as in Example \ref{ex. decreasing dist}.\ This can be seen by considering, for example, the set $\Omega=\{0,1,2\}$ together with the power set $\FF$ and $C$ consisting of $X:=\eins_{\{0\}}$, $Y:=\eins_{\{1\}}$ and all constant functions.\ Let $\P(X>0)=0.5$, $\P(Y>0)=0.3$, $\DR(X)=0.5$, $\DR(Y)=0.7$, and $T\colon [0,1]\to [0,1]$ be a distortion function with
\[
\DR(Z)=T\big(\PD_\P(Z)\big)\quad\text{for all }Z\in C.
\]
Then, for $U:=\eins_{\{1,2\}}$, it follows that $\P(U>0)=0.5=\P(X>0)$. Hence, $$T\big(\PD_\P(U)\big)=T(0.5)=\DR(X)=0.5<0.7=\DR(Y)$$
despite the fact that $Y\leq U$.
\end{remark}

\begin{remark}\label{rem.facts.distorted}
We recall some well-known facts about distorted probabilities, cf.\ \cite{MR3859905}. For the reader's convenience, we provide short proofs of some of the statements collected in this remark in the Appendix \ref{app.distorted}. In the following, let $T\colon [0,1]\to [0,1]$ be a distortion function.
\begin{enumerate}
\item[a)] A well-known fact is that the capacity $c\colon \FF\to [0,1]$, given by
\begin{equation}\label{eq.distcapa}
c(A):=T\big(\P(A)\big)\quad\text{for all }A\in \FF.
\end{equation}
is 2-alternating if $T$ is concave. If $(\Omega,\FF,\P)$ is atomless, also the converse statement holds, cf. \cite[Section 4.6]{MR3859905}.
\item[b)] A capacity $c\colon \FF\to [0,1]$ is called \textit{exact} if there exists a set of countably additive probability measures $\cP$ on $\FF$ with
\[
c(A)=\sup_{\Q\in \cP} \Q(A)\quad\text{for all }A\in \FF.
\]
It is well-known that the distorted probability $c\colon \FF\to [0,1]$, given by \eqref{eq.distcapa}, is an exact capacity if, for all $p,q\in (0,1)$ with $p<q$,
\begin{equation}\label{eq.starshaped}
\frac{T(p)}{p}\geq \frac{T(q)-T(p)}{q-p}\geq \frac{1-T(q)}{1-q}.
\end{equation}
If $(\Omega,\FF,\P)$ is atomless, also the converse holds true, cf. \cite{MR2442200}. In this case,
\begin{equation}\label{eq.rep.distorted}
T(p)=\sup_{\Q\in \cP}\int_{1-p}^1q_\Q(s)\, \d s\quad\text{for all }p\in [0,1],
\end{equation}
where $q_\Q$ denotes the quantile function of the density $\frac{\d \Q}{\d \P}$, see, e.g., \cite[Lemma 4.60]{MR3859905}. In particular,
\[
\moc(p)=\frac{T(p)}{p}-1=\sup_{\Q\in \cP}\ES_\P^{1-p}\big(\tfrac{\d \Q}{\d\P}\big)-1\quad\text{for all }p\in (0,1],
\]
where, for $\alpha\in (0,1]$ and $\Q\in \cP$, $\ES_\P^\alpha\big(\frac{\d \Q}{\d \P}\big):=\frac{1}{1-\alpha}\int_{\alpha}^1q_\Q(s)\, \d s$ denotes the \textit{expected shortfall} of the density $\frac{\d \Q}{\d \P}$ with confidence level $\alpha$. Note that \eqref{eq.starshaped} is, for example, satisfied if $T$ is concave, and observe that \eqref{eq.starshaped} implies that the function $T$ is nondecreasing and absolutely continuous as soon as
\begin{equation}\label{eq.distcont}
\inf_{p\in (0,1]}T(p)=0.
\end{equation}
Recall that a monotone function is a.e.\ differentiable and that absolute continuity of $T$ implies that $$T(p)=\int_{0}^p T'(s)\, \d s\quad\text{for all }p\in [0,1],$$ where $T'$ denotes the weak derivative of $T$.\ 
Moreover, \eqref{eq.starshaped} implies that
\[
\moc(q)\leq \moc (p)\quad\text{for all }p,q\in (0,1]\text{ with }p\leq q.
\]
In fact, assume that \eqref{eq.starshaped} is satisfied, and let $p,q\in (0,1]$. If $p=q$, the statement is trivial, and if $p< q$,
\[
\frac{T(p)\tfrac{q}{p}-T(p)}{q-p}=\frac{T(p)}{p}\geq \frac{T(q)-T(p)}{q-p},
\]
which yields that $T(p)\frac{q}{p}\geq T(q)$.\\
\end{enumerate}
\end{remark}

We conclude this section with the following characterization of the minorants of distorted PDs, which can be found in a similar yet different form in \cite[Theorem 4.79]{MR3859905}.

\begin{proposition} \label{prop. atomless}
    \JS{Let $T\colon [0,1]\to [0,1]$ be a distortion function, which satisfies \eqref{eq.starshaped} and \eqref{eq.distcont}, and $\Q$ be a probability measure on $\FF$. If
    \begin{equation}\label{eq.cond.distineq}
    \int_0^p q_\Q(1-s)\, \d s\leq T(p)\quad\text{for all }p\in [0,1],
    \end{equation}then
    \begin{equation}\label{eq.distineq}
    \PD_\Q(X)\leq T\big(\PD_\P(X)\big)\quad\text{for all }X\in \Bb.
    \end{equation}
    If $(\Omega,\FF,\P)$ is atomless, also the converse holds true.}
\end{proposition}

\begin{remark}
\JS{Consider the situation of Proposition \ref{prop. atomless}.\ In view of Remark \ref{rem.facts.distorted}, a sufficient condition for \eqref{eq.cond.distineq} and thus \eqref{eq.distineq} to be satisfied is that $q_\Q(1-p)\leq T'(p)$ for a.a.\ $p\in (0,1)$.\ That this is, however, not a necessary condition, can easily be seen by considering $T(p)=\sqrt{p}$ for all $p\in [0,1]$ and $\Q=\P$. Then, $q_\P(1-p)=1$ for all $p\in (0,1)$ and $T'(p)=\frac1{2\sqrt{p}}<1$ for all $p\in \big(\frac14,1\big)$. However, $$T(p)=\sqrt{p}\geq p=\int_0^p q_\P(1-s)\, \d s.$$}
\end{remark}

\section{A Case Study on Capital Requirements for Financial Institutions}\label{sec:application}
In this section, we link our axiomatic study of default risk measures to financial institutions' capital requirements accounting for model uncertainty.\ According to the guidelines of the European Banking Authority \cite{EBA_GL}, the PDs of a rating system are calibrated to a `best estimate' level without using systematically conservative input values for the calculation.\ Those PDs are then used in the risk-oriented group management, among others.\ In order to establish the connection between model uncertainty and capital requirements, we are guided by Article 179 (f) and Article 180 (e) of the CRR \cite{CRR} that determine that an appropriate margin of conservatism, reflecting the expected range of estimation errors, must be formed for the `best estimate' PD of the rating system.

Usually, one differentiates between the \textit{expected loss} (EL) and unexpected losses, which are covered by \textit{risk-weighted assets} (RWAs).\ According to CRR Article 153 \cite{CRR}, the dependence of RWAs on the probability of default including model uncertainty is described by the function $\text{RWA}\colon [0,1]\to \R_{\geq0}$, via $\text{RWA}(0)=0$, $\text{RWA}(1)=1$, and
    \begin{equation} \label{eq:RWA}
    \text{RWA}(p):=1.06\cdot12.5\cdot \text{EaD} \cdot  \text{LGD}\Bigg( N\bigg(\frac{G(p)-\sqrt{R(p)}\cdot G(0.999)}{\sqrt{1-R(p)}}  \bigg) -p\Bigg)
    \end{equation} 
    with
$$R(p):=0.12\cdot \frac{1-e^{-50 p}}{1-e^{-50}}+0.24\cdot \left(1-\frac{1-e^{-50 p}}{1-e^{-50}} \right)\quad\text{for all }p\in (0,1),$$\\
where $N$ is the cumulative distribution function of the standard normal distribution, $G$ denotes the inverse distribution function of the standard normal distribution, and $R$ can be seen as a correlation factor.\footnote{For simplicity, we assume the effective maturity $M$ in Article 153 of the CRR \cite{CRR} to be equal to $1$.} On a practical level, a frequent choice for the PD including model uncertainty is the `best-estimate' PD of the rating system multiplied with suitable a margin of conservatism ($1+\moc$ as a multiplier). In order to get a better understanding for the RWA formula, we briefly explain the terms EaD and LGD. The \textit{exposure at default} (EaD) can be seen as the amount of credits of a borrower at the time of its default, for instance 1 million. The \textit{loss given default} (LGD) on the other hand is the height of the loss in relation to the amount of exposure at the time of default, i.e., it is a number between $0$ and $1$.

Risk-weighted assets are essential for financial institutions' capital requirements, since they must hold at least 8\% of RWAs as equity capital.\ Hence, they play an important role for financial institutions' risk provisions. In comparison to the expected loss (EL), which is calculated as the product of `best-estimate' PD, LGD, and EaD. RWAs focus on unexpected losses from exposures, which show different characteristics compared to expected losses.  

\begin{figure}[htb]
\includegraphics[width=\textwidth]{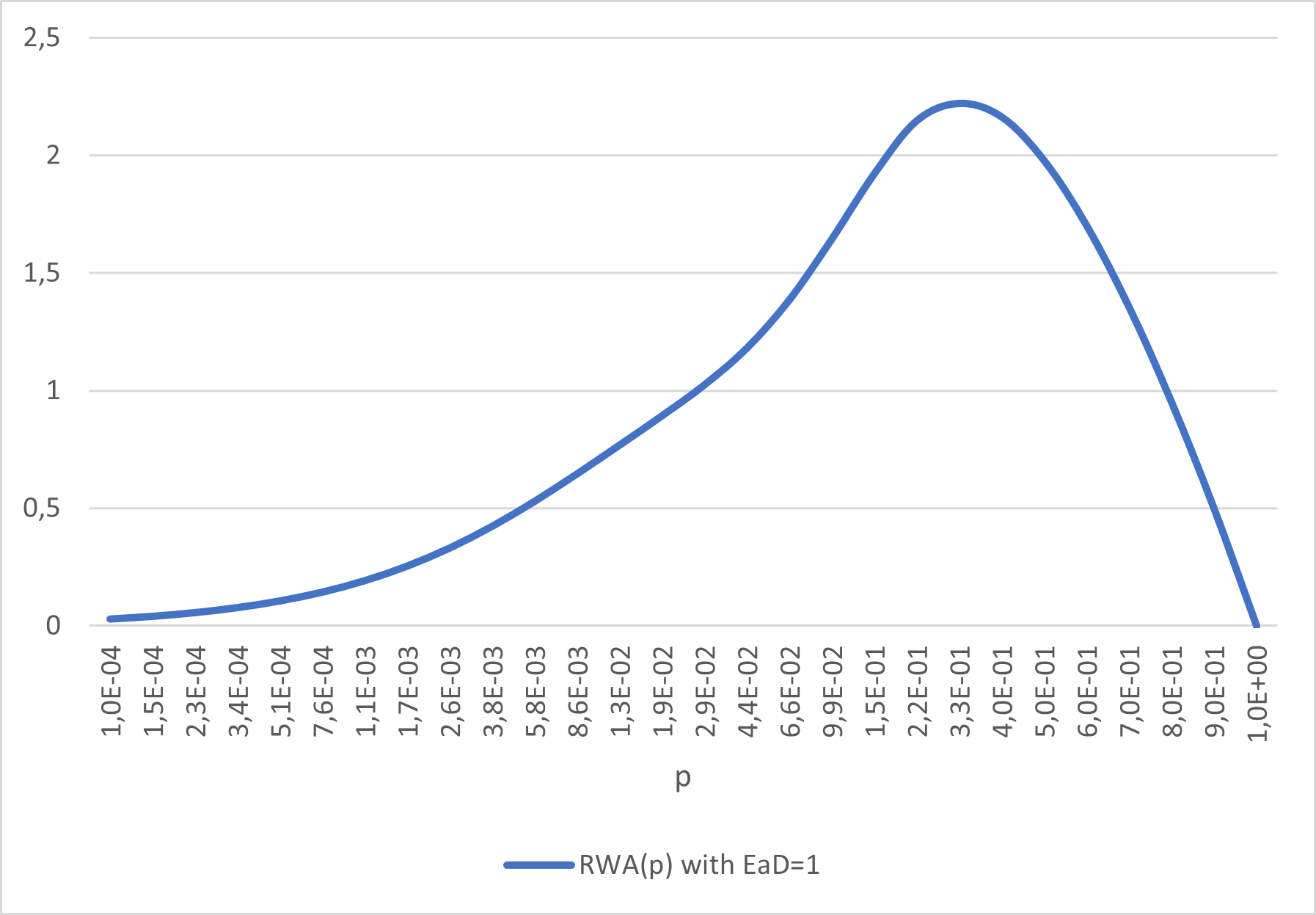}
\caption{RWA dependence on PD}
\label{RWA EAD}
\end{figure}
In Figure \ref{RWA EAD}, we see the dependence of RWAs on $p$, where we assume the EaD to be 1 and the LGD to be 0.4.\ The standardization of setting the EaD as $1$ is also referred to as risk weight (RW) in Article 153 of the CRR \cite{CRR}.\ We observe that the function is not monotonically increasing with respect to the probability of default $p\in [0,1]$ -- a characteristic that applies, e.g., to expected losses.\ This property is not surprising since for very high PDs the loss is expected and thus the unexpected loss becomes smaller (for the same exposure). Nonetheless, this could potentially lead to problems concerning risk provisions as we will see in Example \ref{last ex} below. 

In the following examples, we now focus on a concrete rating system consisting of 22 rating classes and regard two distinct methods of quantifying model uncertainties. Our aim is to analyze the effect of different default risk measures on capital requirements.

For the calculation of capital requirements corresponding to RWAs, the PD including the MoC (hence including model uncertainty) is used.\ In general, a common problem occurring in the course of a model estimation is the lack of sufficient data.\ As a consequence, the model is sometimes estimated on pooled data, using combined information of many financial institutions in order to attain a higher validity.

 There are of course several ways to specify a margin of conservatism.\ On the one hand, it is conceivable to calculate a MoC that is constant for all rating classes, since uncertainties cannot always be precisely quantified and the total data provides a more stable result for the entire rating system.\ However, on the other hand, the highly different amount of data per rating class strongly suggests that the uncertainty of the model should depend on the rating class as well.\ In the following examples, we compare the influence of these different approaches on capital requirements.

To display the 22 rating classes, we fix a discrete probability measure $\P$ and the reference default risk measure $\PD_\P (X):=\P(X>0)$. To express the variable model uncertainty, which, loosely speaking, will be referred to as the `true' model uncertainty, we use a distorted PD (which is a default risk measure according to \ref{Distorted PD}), where the distortion function $T\colon [0,1]\to [0,1]$ is assumed to be nondecreasing with $T(0)=0$ and $T(1)=1$, and the margin of conservatism is given by
$$\text{MoC}(p)=\frac{T\big(p\big)}{p}-1\quad\text{for all }p\in (0,1].$$
If we allowed $T$ to be decreasing on a subinterval of $[0,1]$, then $\DR(X):=T\big(\PD_\P(X)\big)$ is, in general, no default risk measure, cf. Example \ref{ex. decreasing dist}.

From a practical point of view, the exclusion of decreasing functions $T$ in not a big restriction, since it would imply that there are at least two neighboring rating classes $k$ and $k+1$ with corresponding PDs denoted by $p_k$ and $p_{k+1}$ such that $T(p_k)$ is larger than $T(p_{k+1}).$ Assuming, for example, that the PD of two neighboring rating classes grows by the factor $1.5$, i.e., $p_{k+1}=1.5\cdot p_k$, we would attain the inequality 
\[
\moc(p_k)=\frac{T\big(p_k\big)}{p_k}-1>\frac{T\big(p_{k+1}\big)}{p_k}-1=1.5\cdot \frac{T\big(p_{k+1}\big)}{p_{k+1}}-1=1.5\cdot \moc ( p_{k+1}) +0.5,
\]
where the uncertainty of the rating classes $k$ and $k+1$ is expressed by $\moc(p_k)$ and $\moc(p_{k+1})$, respectively.\ If, for instance, $\moc ( p_{k+1} )=0.2$, then $\moc(p_k)>0.8$, which implies a significantly larger uncertainty in rating class $k$ both in relative and absolute numbers.\ Such a difference between two rating classes neither seems to be realistic nor is observed in practice. 

\begin{example}[Monotonically decreasing uncertainty with respect to the rating class] \label{Appl. Example} In this example, we focus on a model where the `true' uncertainty decreases with increasing rating classes, i.e., for two arbitrary rating classes $k$ and $k+1$ with PDs $p_k$ and $p_{k+1}$, respectively, we have 
\[
\moc(p_k)\geq \moc(p_{k+1}).
\]
Note that this is consistent with the choice of a distortion function $T$ that satisfies \eqref{eq.starshaped}, which we will assume throughout the remainder of this example. As a consequence, the distorted PD has a representation over probability measures, which is not unusual \JS{since the lack of defaults in the `good' rating classes typically makes it very difficult to adequately estimate a PD} -- thus the uncertainty here is very high. In the \textit{intermediate} rating classes, there are both a lot of customers and defaults. The set of very `bad' customers is often smaller but most of the defaults occur there implying a higher validity.

In Figure \ref{Distorted PD 1}, we depict three different functions.\ The green function is just the identity function, the dashed blue line is the model uncertainty where a constant MoC is used for all rating classes, and the orange line describes the uncertainty through the function $T(p)=p+\frac{\sqrt{p}-p}{40}$ for $p\in [0,1]$, which represents a `small' perturbation of the identity expressing the `true' uncertainty of the model per rating class.\ We observe that
\begin{figure}[htb]
\includegraphics[width=\textwidth]{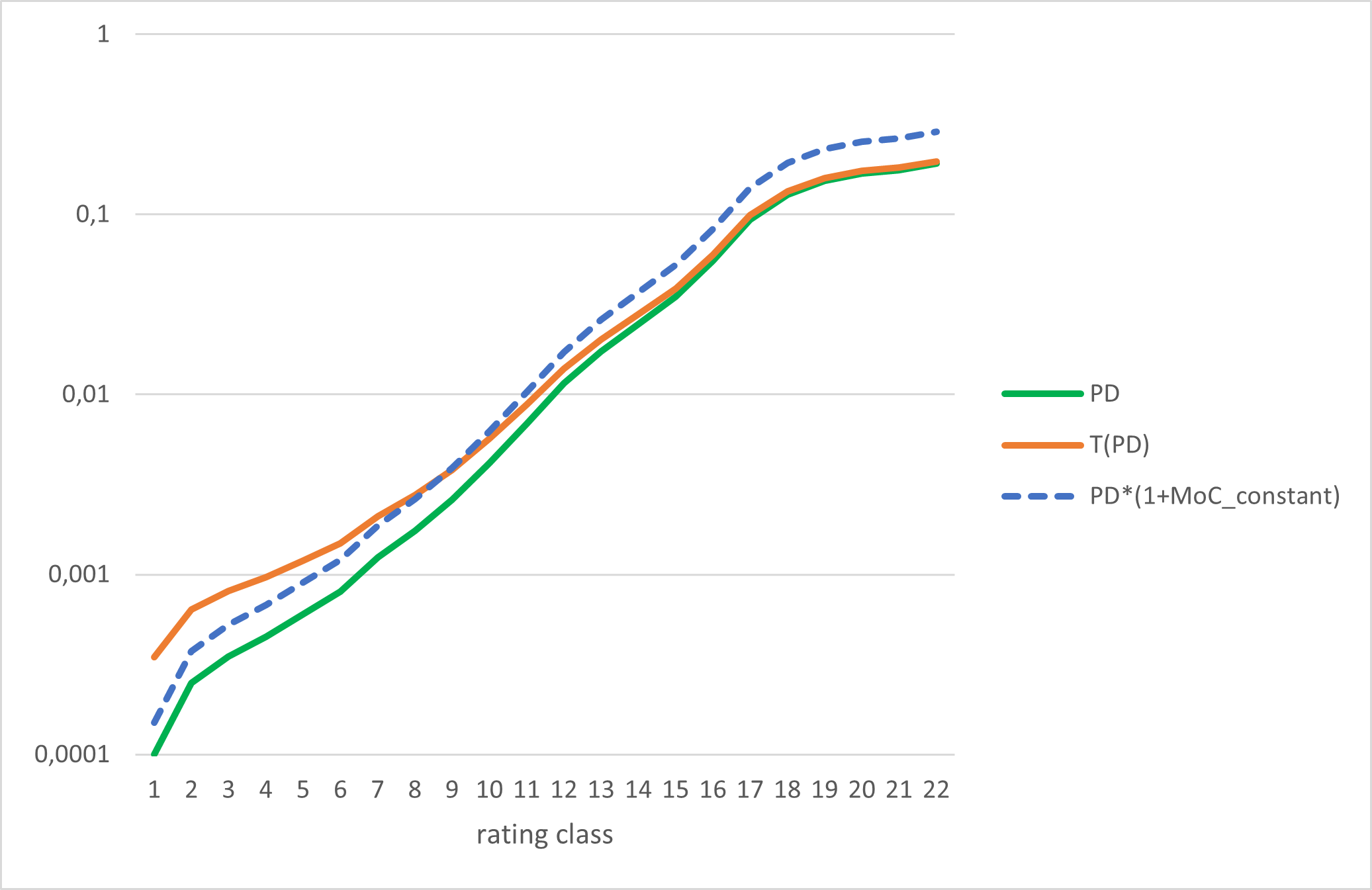}
\caption{PDs per rating class including and excluding uncertainty}
\label{Distorted PD 1}
\end{figure}
$$T(p_k)> p_k\big(1+\text{MoC}_{\text{constant}}\big)\text{ for }k\leq8$$
and
$$T(p_k)< p_k\big(1+\text{MoC}_{\text{constant}}\big)\text{ for }k>8.$$
To illustrate the impact of different uncertainties on RWAs, we consider the quotient between RWAs with model uncertainty and RWAs without model uncertainty once with variable uncertainty and then with a constant uncertainty per rating class. In other words, the quotient tells us with which factor the capital requirement grows per rating class due to uncertainty.

Unsurprisingly, the described factor in Figure \ref{RWA 1} is larger for the variable uncertainty in the rating classes $1$ to $8$. For the other rating classes, it is the other way around. In this case, we can conclude that a higher model uncertainty leads to more RWAs and thus a higher amount of capital requirement.\ As a result, we could have a scenario where a financial institution uses the model uncertainty with a constant MoC to calculate its risk provisions. If the institution has customers primary in the good rating classes then it would not have enough capital requirement, again assuming that the `true' model uncertainty is expressed by $T$.\ On the other hand, an institution with customers in bad rating classes would have a larger amount of capital requirement \JS{than 
necessary by regulation}, thus leading to a possible competitive disadvantage.
\begin{figure}[htb]
\includegraphics[width=\textwidth]{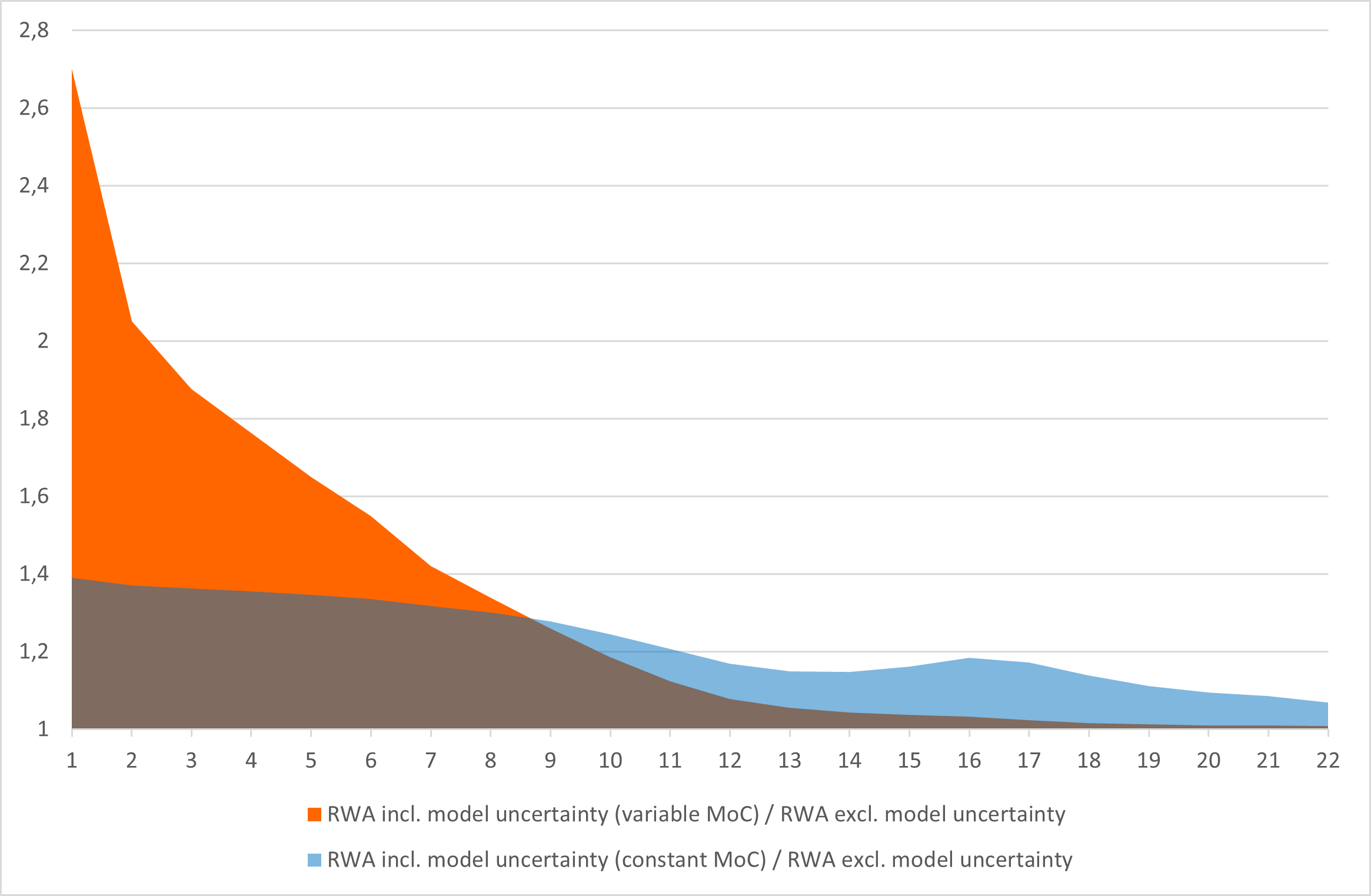}
\caption{RWA growth due to uncertainty}
\label{RWA 1}
\end{figure}
\end{example}

\begin{example}[Nonmonotonic uncertainty per rating class] \label{last ex} In a second example, we focus on a rating model that is estimated only on customers that are in good or middle rating classes. Thus, the uncertainty for bad rating classes is high since the calibration function has to be extrapolated on it without direct information from backtesting or benchmarking data.\ Such constellations can appear in low default portfolios where, historically, no `bad' customers have been observed yet. Nonetheless, it could happen that some customers will be classified in bad rating classes in the future and, in that case, the accuracy of the extrapolation and its uncertainty cannot be neglected.

Again, the distortion function $T$ expresses the `true' degree of uncertainty, and is assumed to be nondecreasing. Hence, the `true' uncertainty of the model again depends on the rating classes and peaks on both the very good and the very bad classes. The distortion function $T$ can explicitly be constructed in such a way that the orange line lies above the dashed blue line for the rating classes $1$ to $4$ and $18$ to $22$.

Eye-catching is the fact that, for the rating classes from $20$ to $22$, the growth of RWAs caused by uncertainty is smaller when, instead of the constant MoC, the distortion function $T$ is used although the uncertainty here is much higher.\ This can be explained by recalling the property of RWAs of not being monotonically increasing with respect to the PD (including uncertainty) since the unexpected loss gets smaller for very high PDs.\ However, regarding capital requirements, model uncertainty is only considered for RWAs and not for ELs.\ So, in contrast to the regulatory intention to increase equity capital by considering model uncertainty, for those rating classes ($20$ to $22$) the capital requirement is de facto smaller when higher model uncertainty is taken as a basis. 
\begin{figure}[htb]
\includegraphics[width=\textwidth]{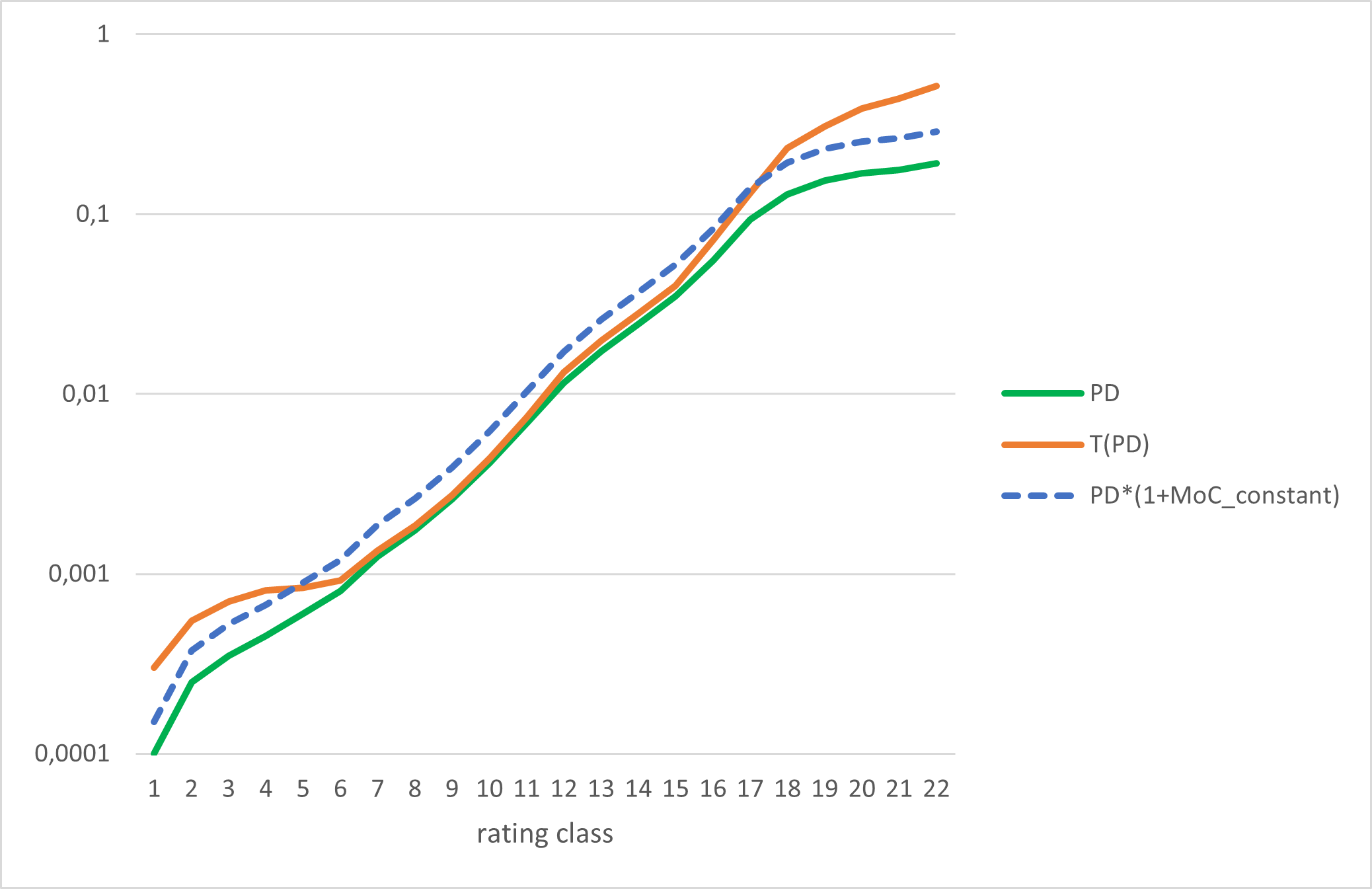}
\caption{PDs per rating class including and excluding uncertainty}
\end{figure}
\begin{figure}[htb]
\includegraphics[width=\textwidth]{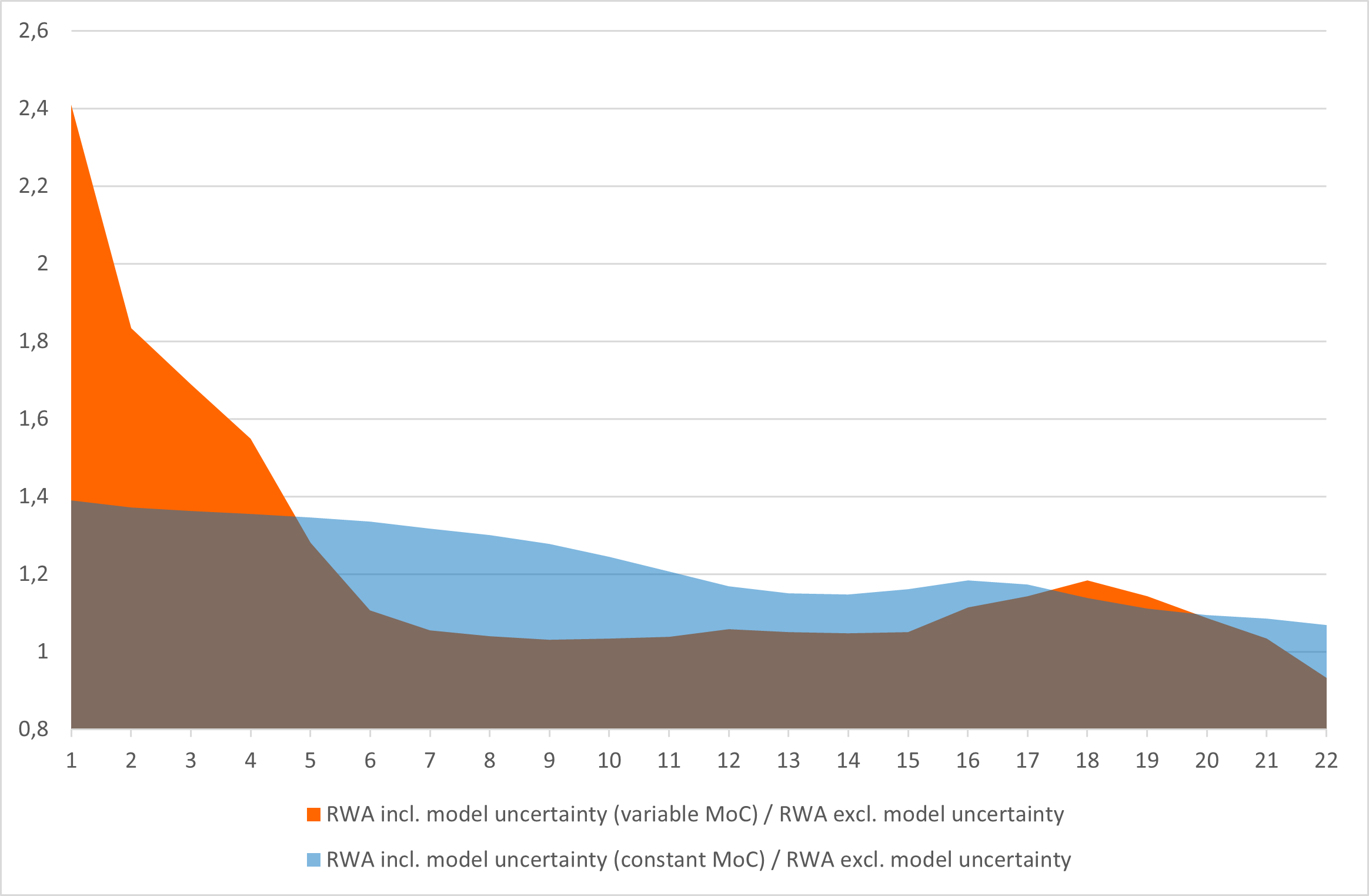}
\caption{RWA growth due to uncertainty}
\end{figure}
\end{example}

\appendix

\section{Distorted probabilities}\label{app.distorted}

Throughout this section, let $(\Omega, \mathcal F,\P)$ be a probability space and $T\colon [0,1]\to [0,1]$ with $T(0)=0$ and $T(1)=1$. Moreover, let
\[
c(A):=T\big(\P(A)\big)\quad\text{for all }A\in \FF.
\]
The following lemma can be found, e.g., in \cite{MR3859905}.\ For the reader's convenience, we provide a short proof.

\begin{lemma}\label{lem.coheren}
The capacity $c$ is $2$-alternating if $T$ is concave.\ If $(\Omega,\FF,\P)$ is atomless, also the converse is true.
\end{lemma}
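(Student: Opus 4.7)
For the forward implication, I will reduce the $2$-alternating inequality for $c$ to an increment inequality for the concave function $T$. Given $A,B\in\FF$, set $\alpha:=\P(A\cap B)$, $\beta:=\P(A\setminus B)$, $\gamma:=\P(B\setminus A)$, so that $\P(A\cap B)=\alpha$, $\P(A)=\alpha+\beta$, $\P(B)=\alpha+\gamma$, and $\P(A\cup B)=\alpha+\beta+\gamma$. The target inequality $c(A\cap B)+c(A\cup B)\leq c(A)+c(B)$ then becomes
\[
T(\alpha)+T(\alpha+\beta+\gamma)\leq T(\alpha+\beta)+T(\alpha+\gamma),
\]
or equivalently
\[
T(\alpha+\beta)-T(\alpha)\;\geq\; T(\alpha+\beta+\gamma)-T(\alpha+\gamma).
\]
This is precisely the statement that, for a concave $T$, the increment $T(x+\beta)-T(x)$ (over intervals of fixed length $\beta$) is nonincreasing in $x$; applied with $x=\alpha$ and $x=\alpha+\gamma$, it yields the claim.

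For the converse, assume $(\Omega,\FF,\P)$ is atomless and $c$ is $2$-alternating. I will first show that $T$ is nondecreasing on $[0,1]$. For $0\leq p\leq q\leq 1$, atomlessness supplies $A\subset B\in\FF$ with $\P(A)=p$, $\P(B)=q$, whence $T(p)=c(A)\leq c(B)=T(q)$ by monotonicity of the capacity. Next, I will establish midpoint concavity of $T$: for $p,q\in[0,1]$ with $p\leq q$, I select disjoint measurable sets $E_1,E_2,E_3,E_4$ with $\P$-masses $p$, $\tfrac{q-p}{2}$, $\tfrac{q-p}{2}$, $1-q$, set $A:=E_1\cup E_2$ and $B:=E_1\cup E_3$; then $\P(A)=\P(B)=\tfrac{p+q}{2}$, $\P(A\cap B)=p$, and $\P(A\cup B)=q$. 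The $2$-alternating inequality then yields
\[
T(p)+T(q)\;\leq\; 2\,T\!\left(\tfrac{p+q}{2}\right).
\]

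Finally, I combine the two ingredients: a nondecreasing (hence Borel measurable), bounded, midpoint-concave function on a real interval is concave. This is a classical consequence of the theorems of Jensen and Sierpi\'nski, which state that any measurable midpoint-concave function on an interval is automatically concave. Alternatively, midpoint concavity extends by the usual dyadic induction to all dyadic convex combinations, and monotonicity of $T$ together with its boundedness rules out jump-type pathologies and permits passage to the limit for arbitrary $\lambda\in[0,1]$. The main obstacle is thus the final regularity step; the set-theoretic constructions above are straightforward given atomlessness, and the increment inequality for concave functions is standard.
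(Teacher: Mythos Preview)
Your argument is correct. The forward implication is the same as the paper's: your increment inequality
\[
T(\alpha+\beta)-T(\alpha)\;\geq\; T(\alpha+\beta+\gamma)-T(\alpha+\gamma)
\]
is just a rephrasing of the paper's observation that $\P(A)$ and $\P(B)$ are complementary convex combinations of $\P(A\cap B)$ and $\P(A\cup B)$, followed by two applications of concavity.

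For the converse, however, your route is genuinely different and considerably more elementary. The paper invokes heavy machinery: it identifies the Choquet integral with respect to $c$ as a law-invariant coherent risk measure, appeals to Jouini--Schachermayer--Touzi to obtain continuity from below on an atomless space, passes to a dual representation $\int X\,\d c=\sup_{\Q\in\cP}\E_\Q(X)$, and then reads off concavity of $T$ via the quantile representation of $\Q(A)$ (Lemma~4.60 in F\"ollmer--Schied). Your approach bypasses all of this: a single judicious choice of sets $A=E_1\cup E_2$, $B=E_1\cup E_3$ turns the $2$-alternating inequality directly into midpoint concavity, and then Sierpi\'nski (or Bernstein--Doetsch, since $T$ is bounded) upgrades to full concavity. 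This is self-contained and requires nothing beyond atomlessness and a classical real-analysis fact. The paper's approach, by contrast, yields the Kusuoka-type representation as a byproduct, which is exploited elsewhere in the appendix; so its detour is not wasted in the larger context, but for this lemma alone your argument is cleaner.

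One small remark on your Step~1: the monotonicity of $c$ you use is part of the hypothesis (``the capacity $c$ is $2$-alternating'' presupposes that $c$ is a capacity), so this step is legitimate; the $2$-alternating inequality by itself would not give it.
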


\begin{proof}
First assume that $T$ is concave, and let $A,B\in \FF$. Since $\P(A\cap B)\leq \P(A)\leq \P(A\cup B)$, there exists some $\lambda\in [0,1]$ such that
\[
\P(A)=\lambda\P(A\cup B)+(1-\lambda)\P(A\cap B).
\]
Since $\P(A)+\P(B)=\P(A\cup B)+\P(A\cap B)$, it follows that
\[
\P(B)=(1-\lambda)\P(A\cup B)+\lambda\P(A\cap B).
\]
Hence, the concavity of $T$ implies that
\[
T\big(\P(A)\big)+T\big(\P(B)\big)\geq T\big(\P(A\cup B)\big)+T\big(\P(A\cap B)\big).
\]
Now, assume that $(\Omega,\FF,\P)$ is atomless and that $c$ is $2$-alternating.\ By Remark \ref{rem.distpdcontbelow} b), there exists a set $\cP$ of countably additive and (w.r.t.\ $\P$) absolutely continuous probability measures with
\[
\int X\,\d c=\sup_{\P\in \cP} \E_\P(X)\quad\text{for all }X\in \Bb.
\]
 Let $p,q\in [0,1]$ with $p\leq q$.\ Since $(\Omega,\FF,\P)$ is atomless, there exist $A_1,A_2\in \FF$ with $\P(A_1)=p$, $\P(A_2)=q$, and $A_1\subset A_2$. Let $\varepsilon>0$ and $X:=\eins_{A_1}+\eins_{A_2}$.\ Then, there exists some $\Q\in \cP$ with
 \[
 c(A_1)+c(A_2)=\int X\, \d c\leq \E_\Q(X)+\varepsilon=\Q(A_1)+\Q(A_2)+\varepsilon.
 \]
  Since $\Q(A_1)\leq c(A_1)$ and $\Q(A_2)\leq c(A_2)$, it follows that $c(A_1)\leq \Q(A_1)+\varepsilon$ and $c(A_2)\leq\Q(A_2)+\varepsilon$.\ By \cite[Lemma 4.60]{MR3859905}, for all $\lambda\in [0,1]$,
 \begin{align*}
T\big((1-\lambda)p+\lambda q\big)&\geq \int_{1-(1-\lambda)p-\lambda q}^1q_\Q(s)\, \d s\geq 
(1-\lambda)\int_{1-p}^1q_\Q(s)\, \d s+\lambda \int_{1-q}^1q_\Q(s)\, \d s\\
&\geq (1-\lambda )c(A_1)+\lambda c(A_2)-\varepsilon=(1-\lambda )T(p)+\lambda T(q)-\varepsilon,
 \end{align*}
 where, in the second step, we used the fact that the map $[0,1]\to \R, \; t\mapsto \int_{1-t}^1q_\Q(s)\, \d s$ is concave.
\end{proof}

\begin{remark}\label{rem.starshaped}
By \cite[Lemma 3.1]{MR2442200}, the function $T$ satisfies \eqref{eq.starshaped} if and only if \begin{equation}\label{eq.supT}T(p)=\sup_{i\in I} T_i(p)\quad \text{for all }p\in [0,1],\end{equation} where $I$ is a nonempty set and $T_i\colon [0,1]\to [0,1]$ is concave with $T_i(0)=0$ and $T_i(1)=1$ for all $i\in I$.\ In this case, the supremum in \eqref{eq.supT} is in fact a maximum.
\end{remark}

Using Lemma \ref{lem.coheren}, we can prove the following result from \cite{MR2442200}, see also \cite{MR1401848}. 

\begin{lemma}\label{lem.starshaped}
The capacity $c$ is exact if $T$ satisfies \eqref{eq.starshaped}.\ If $(\Omega,\FF,\P)$ is atomless, also the converse is true.
\end{lemma}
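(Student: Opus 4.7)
The proof strategy hinges on Remark \ref{rem.starshaped}, which translates the starshaped condition \eqref{eq.starshaped} into the existence of a pointwise representation $T = \sup_{i \in I} T_i$ by concave distortion functions $T_i$, and on Lemma \ref{lem.coheren}, which links concavity of a distortion to $2$-alternation of the induced capacity.

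For the sufficiency direction, I would first decompose $T = \sup_{i \in I} T_i$ with each $T_i$ concave by Remark \ref{rem.starshaped}, and set $c_i(A) := T_i(\P(A))$. By Lemma \ref{lem.coheren}, each $c_i$ is $2$-alternating. A concave distortion function $T_i$ is automatically nondecreasing and continuous on $(0,1)$; the bound $T_i(p) \geq p T_i(1) + (1-p) T_i(0) = p$ together with $T_i \leq 1$ forces $T_i(p) \to 1$ as $p \uparrow 1$, so $T_i$ is also left-continuous at $1$. Hence $T_i$ is left-continuous on $(0,1]$, Proposition \ref{prop.contbelow} ensures that $c_i$ is continuous from below, and Remark \ref{rem.distpdcontbelow}~a) yields a set $\cP_i$ of countably additive, $\P$-absolutely continuous probability measures with $c_i(A) = \sup_{\Q \in \cP_i} \Q(A)$. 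Setting $\cP := \bigcup_{i \in I} \cP_i$, I obtain
$$c(A) = T\big(\P(A)\big) = \sup_{i \in I} c_i(A) = \sup_{\Q \in \cP} \Q(A),$$
so $c$ is coherent.

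For the converse under atomlessness, I start from a representation $c(A) = \sup_{\Q \in \cP} \Q(A)$ with countably additive $\Q$'s. Since $T(0) = 0$ entails $\Q(A) = 0$ for every $\P$-null set $A$, each $\Q \in \cP$ is $\P$-absolutely continuous and hence admits a density $d\Q/d\P$ with quantile function $q_\Q$. I associate to each $\Q \in \cP$ the function
$$T_\Q(p) := \int_{1-p}^1 q_\Q(s)\, \d s, \quad p \in [0,1],$$
which is a concave distortion function satisfying $T_\Q(0) = 0$ and $T_\Q(1) = \E_\P[d\Q/d\P] = 1$. By atomlessness, $\{A \in \FF : \P(A) = p\}$ is nonempty for every $p \in [0,1]$, so $T(p) = \sup\{c(A) : \P(A) = p\}$, and the Hardy--Littlewood identity \cite[Lemma 4.60]{MR3859905} provides $\sup\{\Q(A) : \P(A) = p\} = T_\Q(p)$ for each $\Q \in \cP$. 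Interchanging the suprema yields $T(p) = \sup_{\Q \in \cP} T_\Q(p)$, and Remark \ref{rem.starshaped} concludes that $T$ satisfies \eqref{eq.starshaped}.

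The main obstacle is the converse direction, specifically the attainability identity $\sup\{\Q(A) : \P(A) = p\} = T_\Q(p)$ and the subsequent interchange of suprema: atomlessness is essential here, as it allows the construction, for each $p$, of a set of $\P$-measure $p$ on which $d\Q/d\P$ concentrates above its $(1-p)$-quantile, so that the Hardy--Littlewood bound is actually reached. Without atomlessness, this may fail and the supremum-of-concaves representation of $T$ need not hold.
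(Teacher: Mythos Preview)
Your proof is correct and follows essentially the same strategy as the paper: both directions rest on Remark \ref{rem.starshaped} (the decomposition $T=\sup_i T_i$ into concave distortions), Lemma \ref{lem.coheren} (concave $\Rightarrow$ $2$-alternating), and the Hardy--Littlewood identity \cite[Lemma 4.60]{MR3859905}. The paper is terser---it cites Lemma \ref{lem.coheren} directly for the existence of the sets $\cP_i$ in the forward direction and invokes Remark \ref{rem.distpdcontbelow} together with \cite[Lemma 4.60]{MR3859905} for the converse---whereas you spell out the left-continuity of concave distortions and the absolute continuity of the $\Q$'s explicitly; these are exactly the details the paper suppresses.
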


\begin{proof}
First assume that $T$ satisfies \eqref{eq.starshaped}. Then, by Remark \ref{rem.starshaped}, $T(p)=\sup_{i\in I} T_i(p)$ for all $p\in [0,1]$, where $I$ is a nonempty set and $T_i\colon [0,1]\to [0,1]$ is concave with $T_i(0)=0$ and $T_i(1)=1$ for all $i\in I$. By Lemma \ref{lem.coheren}, for each $i\in I$, there exists a nonempty set $\cP_i$ of countably additive probability measures with
\[
T_i\big(\P(A)\big)=\sup_{\Q\in \cP_i} \Q(A)\quad\text{for all }A\in \FF.
\]
Let $\cP:=\bigcup_{i\in I}\cP_i$. Then,
\[
T\big(\P(A)\big)=\sup_{i\in I}T_i\big(\P(A)\big)=\sup_{i\in I}\sup_{\Q\in \cP_i} \Q(A)=\sup_{\Q\in \cP} \Q(A)\quad\text{for all }A\in \FF.
\]
Now, assume that $(\Omega,\FF,\P)$ is atomless and that $c$ is exact.\ Let $\cP$ be a nonempty set of countably additive probability measures with 
\[
T\big(\P(A)\big)=\sup_{\Q\in \cP} \Q(A)\quad\text{for all }A\in \FF.
\]
Then, By Remark \ref{rem.distpdcontbelow} and \cite[Lemma 4.60]{MR3859905},
\[
T(p)=\sup_{\Q\in \cP}\int_{1-p}^1q_\Q(s)\, \d s\quad\text{for all }p\in [0,1].
\]
Since the map $[0,1]\to \R, \; t\mapsto \int_{1-t}^1q_\Q(s)\, \d s$ is concave for all $\Q\in \cP$, it follows that $T$ satisfies \eqref{eq.starshaped} by Remark \ref{rem.starshaped}.
\end{proof}

\begin{remark}
	Alternatively, Lemma \ref{lem.coheren} and Lemma \ref{lem.starshaped} can also be proved without invoking the Kusuoka representation of law-invariant risk measures.\ In this case, one uses the fact that, for every probability measure $\Q$, which is absolutely continuous w.r.t.\ $\P$, every set $C\in \FF$ with $\P(C)>0$, and every $\lambda\in (0,1)$, there exists a set $A\in \FF$ with $A\subset C$, $\P(A)=\lambda \P(C)$, and $\Q(A)\geq \lambda \Q(C)$. In fact,  let 
	\[
	t:=\inf\bigg\{s>0\,\bigg|\,\P\Big(\Big\{\tfrac{\d \Q}{\d \P}>  s\Big\}\cap C\Big)\leq \lambda \P(C)\bigg\}.
	\]
	Since $(\Omega,\FF,\P)$ is atomless, there exists a set $B\in \FF$ with $B\subset \big\{\tfrac{\d \Q}{\d \P}= t\big\}$ and $$\P\bigg(\Big(B\cup\Big\{\tfrac{\d \Q}{\d \P}>  t\Big\}\Big)\cap C\bigg)=\lambda \P(C)>0.$$ Then, for $A:=\big(B\cup\big\{\tfrac{\d \Q}{\d \P}>  t\big\}\big)\cap C$,
	\begin{align*}
		\Q(C)&\leq t\P(C\setminus A)+\Q(A)\leq \frac{\P(C\setminus A)}{\P(A)}\E_\P\Big( \eins_{A} \tfrac{\d \Q}{\d \P}\Big)+\Q(A)\\
		&\leq \frac{\P(C)}{\P(A)}\Q(A)=\frac{\Q(A)}{\lambda}.
		\end{align*}
	Note that $\P(C\setminus A)=(1-\lambda )\P(C)$ and $\Q(C\setminus A)\leq (1-\lambda)\Q(C)$.
\end{remark}

\section{Proofs of Section \ref{sec: Definition}} \label{proofs of sc.2}

\begin{proof}[Proof of Theorem \ref{thm.extension}]
We first show that $\DR_F(X)=\DR(X)$ for all $X\in C$. In order to do so, let $X\in C$. Since $F(X-X)=F(0)=0$, it follows that $\DR_F(X)\leq \DR(X)$. On the other hand, $\DR(X)\leq \DR(X_0)$ for all $X_0\in C$ with $F(X-X_0)\leq 0$, and we obtain that $\DR(X)\leq \DR_F(X)$. Since $C$ contains all constants, we have already verified Property (ii) in Definition \ref{Def DRM} for $\DR_F$. In order to prove Property (i), let $X,Y\in \Bb$ with $X\leq Y$. Since $F$ is monotone,
$$\big\{X_0\in C \, \big| \, F\left(X-X_0 \right) \leq 0  \big\}\subset \big\{X_0\in C \, \big| \, F\left(Y-X_0 \right) \leq 0 \big\},$$ 
which implies that $\DR_F(X)\leq \DR_F(Y)$.
\end{proof}

\begin{proof}[Proof of Corollary \ref{cor. est}]
Let $X\in \Bb$. Since $$\overline\DR(X)\leq \overline\DR(X_0)=\DR(X_0)\quad\text{for all }X_0\in C\text{ with }X\leq X_0,$$ it follows $\overline\DR(X)\leq \DR_{\sup}(X)$.
\end{proof}

\section{Proofs of Section \ref{sec.acceptance}}\label{proofs of sc.3}

\begin{proof}[Proof of Proposition \ref{prop.VR}]
In order to prove part a), let $\alpha \in (0,1)$. Then,
$$
\DR(X-m)\leq \DR(Y-m)\quad\text{for all }m\in \R\text{ and }X,Y\in \Bb\text{ with }X\leq Y.
$$
Hence, $\VR(X)\leq \VR(Y)$. Since $\DR(0)=0$, it follows that $\VR(0)=0$. Moreover, by definition of $\VR$, $\VR(X+m)=\VR(X)+m$ for all $X\in \Bb$ and $m\in \R$. We have therefore shown that $\VR$ is a monetary risk measure.

Next, we prove part b).\ To that end, let $X\in \Bb$. First assume that $\DR(X)=1$. Then, $\VR(X)>0$ for all $\alpha\in (0,1)$, so that
\[
\inf \Big(\big\{\alpha\in (0,1) \,\big|\, \VR(X)\leq 0  \big\}\cup\{1\}\Big)=1=\DR(X).
\]
Next, assume that $\DR(X)<1$. Then, for all $\alpha\in (0,1)$, $\VR(X)\leq 0$ if and only if $\DR(X)\leq \alpha$. Hence,
\[
\inf \Big(\big\{\alpha\in (0,1) \,\big|\, \VR(X)\leq 0  \big\}\cup\{1\}\Big)=\DR(X).
\]
\JS{We proceed with the proof of part c).\ If $\VR$ is positively homogeneous for all $\alpha\in (0,1)$, then $\DR$ is scaling invariant by part b).\ On the other hand, if $\DR$ is scaling invariant, then, for all $\alpha\in (0,1)$, $X\in \Bb$, and $\la>0$,
$$
\big\{ m\in \R\,\big|\,\DR(\lambda X-m)\leq\alpha\big\}= \big\{ \lambda m\in \R\,\big|\,\DR(X-m)\leq\alpha\big\}.
$$
Hence, for all $\alpha\in (0,1)$, $X\in \Bb$, and $\lambda>0$, it follows that $$\VR(\lambda X)=\lambda \VR(X).$$
It remains to prove part d). If $\DR$ is quasi-convex, it follows that the level set $$\mathcal A_\alpha:=\{X\in \Bb\, |\, \DR(X)\leq \alpha\}$$ is convex for all $\alpha\in (0,1)$.\ Hence, by \cite[Proposition 4.7]{MR3859905}, $$\VR(X):=\inf \{m\in \R\, |\, X-m\in \mathcal A_\alpha\}, \quad\text{for }X\in \Bb,$$ defines a convex risk measure for all $\alpha\in (0,1)$.} Next, assume that $\VR$ is convex for all $\alpha\in (0,1)$. Then, for all $\alpha\in (0,1)$, $\lambda\in (0,1)$, and $X,Y\in \Bb$ with $\DR(X)\leq \alpha$ and $\DR(Y)\leq \alpha$, it follows that
 \[
 \VR\big(\lambda X+(1-\lambda) Y\big)\leq \lambda \VR(X)+(1-\lambda)\VR(Y)\leq 0,
 \]
 so that, by part b), $$\DR\big(\lambda X+(1-\lambda )Y\big)\leq \alpha.$$
 This shows that $\DR$ is quasi-convex.
\end{proof}

\begin{proof}[Proof of Theorem \ref{thm.recalculate}]
Let $X,Y\in \Bb$ with $X\leq Y$. Since $R^\alpha(X)\leq R^\alpha(Y)$ for all $\alpha\in (0,1)$, it follows that $\DR(X)\leq \DR(Y)$. Moreover,  $R^\alpha(0)=0$ for all $\alpha\in (0,1)$, so that $\DR(0)=0$. On the other hand, for all $m\in \R$ with $m>0$, $R^\alpha(m)=m>0$ for all $\alpha\in (0,1)$. Therefore,
\[
\big\{\alpha\in (0,1) \,\big|\, R^\alpha(X)\leq 0 \big\}\cup\{1\}=\{1\}
\]
and it follows that $\DR(X)=1$. We have therefore shown that $\DR$ is a default risk measure.

It remains to show the equality $R^\alpha=\VR$ for all $\alpha\in (0,1)$.\ To that end, let $\alpha\in (0,1)$ and $X\in \Bb$. First, observe that $$\DR\big(X-R^\alpha(X)\big)\leq \alpha.$$
Hence, $\VR(X)\leq R^\alpha(X)$.\ Now, let $\beta\in (\alpha,1)$ and $m\in \R$ with $\DR(X-m)\leq \alpha$. Then, $R^\beta(X-m)\leq 0$, which implies that $R^\beta(X)\leq m$. Taking the supremum over all $\beta\in (\alpha,1)$ and the infimum over all $m\in \R$ with $\DR(X-m)\leq \alpha$, it follows that $R^\alpha(X)\leq \VR(X)$.
\end{proof}

\begin{proof}[Proof of Proposition \ref{Prop. VaR}]
The implication (ii)$\,\Rightarrow\,$(i) follows from Proposition \ref{prop.VR} b), once we have shown that (i) implies (ii). To that end, let $X\in \Bb$ and $\alpha\in (0,1)$. Then, for all $m\in \R$ with
\[
\sup_{\P\in \mathcal P}\P(X>m)=\DR(X-m)\leq \alpha,
\]
it follows that $m\geq \VaR^\alpha_\P(X)$ for all $\P\in \mathcal P$. This implies that
\[
\sup_{\P\in \mathcal P}\VaR^\alpha_\P(X)\leq \VR(X).
\]
Now, let $\varepsilon>0$ and
\[
m:=\sup_{\P\in \mathcal P}\VaR^\alpha_\P(X)+\varepsilon.
\]
Then, for all $\P\in \mathcal P$, $\P(X>m)\leq \alpha$. This implies that
\[
\VR(X)\leq m=\sup_{\P\in \mathcal P}\VaR^\alpha_\P(X)+\varepsilon.
\]
Taking the limit $\varepsilon \downarrow 0$, the claim follows.
\end{proof}

\begin{proof}[Proof of Proposition \ref{Prop. robust VaR}]
Again, the implication (ii)$\,\Rightarrow\,$(i) follows from Proposition \ref{prop.VR} b), once we have shown that (i) implies (ii). To that end, let $X\in \Bb$ and $\alpha\in (0,1)$. Then, for all $m\in \R$ with
\[
T\big(\P(X>m)\big)=\DR(X-m)\leq \alpha,
\]
it follows that $\P(X>m)\leq T^{-1}(\alpha)$. Now, let $m\in \R$ with
$\P(X>m)\leq T^{-1}(\alpha)$. Since $T$ is lower semicontinuous, it follows that
\[
T\big(\P(X>m)\big)\leq \alpha.
\]
Hence,
\[
\big\{m\in \R\, \big| \DR(X-m)\leq \alpha\big\}=\big\{m\in \R\, \big| \P(X>m)\leq T^{-1}(\alpha)\big\}.
\]
Taking the infimum, both, on the left and the right-hand side, the claim follows.
\end{proof}
\section{Proofs of Section \ref{sec.properties}}
\label{proofs of sc.4}

\begin{proof}[Proof of Proposition \ref{prop.condition}]
Let $X\in \Bb$ with $X\geq 0$. In Remark \ref{rk:scaling} a), we have already seen that $\DR(X)\leq \DR(\eins_{\{X>0\}})$. On the other hand, for all $\varepsilon >0$,
 \[
\DR\big(X+\varepsilon\eins_{\{X>0\}}\big)\geq  \DR\big(\varepsilon\eins_{\{X>0\}}\big)= \DR\big(\eins_{\{X>0\}}\big)
 \]
and, using again Remark \ref{rk:scaling} a),
\[
\DR(X)\geq \DR\big(X\eins_{\{X>\varepsilon\}}\big)\geq \DR\big(\varepsilon\eins_{\{X>\varepsilon\}}\big)=\DR\big(\eins_{\{X>\varepsilon\}}\big).
\]
The proof is complete.
\end{proof}

\begin{proof}[Proof of Proposition \ref{prop.capacity_represenatation}]
We start with the implication (i)$\,\Rightarrow\,$(ii). Let $m>0$ and $X\in \Bb$ with $X\geq 0$. Then, $\DR(X)\leq\DR\big(X+m\eins_{\{X>0\}}\big)$. Moreover, for every $\lambda\in (0,1)$,
\[
\DR\big(X+m\eins_{\{X>0\}}\big)=\DR\big(\lambda X+\lambda m\eins_{\{X>0\}}\big)\leq \DR\big(X+\lambda m\eins_{\{X>0\}}\big).
\]
Therefore, by assumption,
\[
\DR\big(X+m\eins_{\{X>0\}}\big)\leq \inf_{\lambda\in (0,1)}\DR\big(X+\lambda m\eins_{\{X>0\}}\big)=\DR(X).
\]
In order to prove the implication (ii)$\,\Rightarrow\,$(i), let $X\in \Bb$ with $X\geq 0$ and $\lambda>0$. If $X=0$ or $\lambda =1$, it follows that $\DR(\lambda X)=\DR(X)$. Therefore, assume that $\sup X>0$ and $\lambda \neq 1$. First, we consider the case, where $\lambda> 1$.\ Then, $\lambda X\geq X$, so that $\DR(\lambda X)\geq \DR(X)$. Moreover, $\lambda X\leq X+(\lambda-1)\sup X\eins_{\{X>0\}}$. Defining $m:=(\lambda-1)\sup X$, it follows that
\[
\DR(\lambda X)\leq \DR\big(X+m\eins_{\{X>0\}}\big)=\DR(X).
\]
Now, let $\lambda<1$. Then, $\DR(\lambda X)\leq \DR(X)$. On the other hand,
$$X\leq \lambda X+(1-\lambda )(\sup X)\eins_{\{X>0\}}=\lambda X+(1-\lambda )(\sup X)\eins_{\{\lambda X>0\}}.$$
Therefore, defining $m:=(1-\lambda)\sup X$, we find that
\[
\DR(X)\leq \DR\big(\lambda X+m\eins_{\{\lambda X>0\}}\big)=\DR(\lambda X).
\]
By Proposition \ref{prop.condition}, (i) implies (iii) and, trivially, (iii) implies (i). 
\end{proof}
\begin{proof}[Proof of Lemma \ref{lem.X_plus}]
Trivially, (i)$\,\Rightarrow\,$(ii) and (ii)$\,\Rightarrow\,$(iii). In order to prove the remaining implication (iii)$\,\Rightarrow\,$(i), let $X\in \Bb$. Then, $X\leq X^+$ and therefore $\DR(X)\leq \DR(X^+)$. If $X\geq 0$, then $X=X^+$ and it follows that $\DR(X)=\DR(X^+)$. Therefore, assume that $\inf X<0$ and define $m:=-\inf X$. Then,
\[
X^+-m \eins_{\{X^+= 0\}}=X^+-m \eins_{\{X\leq 0\}}\leq X.
\]
By assumption, we obtain that
\[
\DR(X^+)=\DR\big(X^+-m \eins_{\{X^+= 0\}}\big)\leq \DR(X).
\]
\end{proof}
\section{Proofs of Section \ref{sc:Choquet}}
\label{proofs of sc.5}
\begin{proof}[Proof of Proposition  \ref{prop. sc 5}]
 \JS{The implication (i)$\, \Rightarrow\, $(ii) follows from Remark \ref{rem.choquet} and the implication (ii)$\, \Rightarrow\, $(i) is trivial.}
\end{proof}
\begin{proof}[Proof of Proposition \ref{prop.contbelow}]
Clearly, (iii) implies (ii).\ The implication (ii)$\,\Rightarrow\,$(i) follows directly from the fact that, for any sequence $(X_n)_{n\in \N}\subset \Bb$ with $X_n\nearrow X\in \Bb$ as $n\to \infty$,  $\left\{X_n>0\right\}\subset \left\{X_{n+1}>0\right\}$ for all $n\in \N$ and $$\bigcup_{n\in \N}\left\{X_n>0\right\}=\{ X>0\}.$$
 It remains to prove the implication (i)$\,\Rightarrow\,$(iii). Let $(X_n)_{n\in \N}\subset \Bb$ and $X\in \Bb$ with $X_n \nearrow X$ as $n\to \infty$. \JS{By potentially adding $\| X_1\|_\infty$ to $X$ and $X_n$ for all $n\in \N$ and using the fact that the Choquet integral is a monetary risk measure, we may w.l.o.g.\ assume that $X_1\geq 0$.} Then, using the monotone convergence theorem,
\[
\lim_{n\to \infty}\int X_n\, \d c=\lim_{n\to \infty}\int_0^\infty\DR(X_n-s)\, \d s=\int_0^\infty\DR(X-s)\, \d s=\int X\, \d c.
\]
The proof is complete.
\end{proof}
\begin{proof}[Proof of Proposition \ref{prop. polish}]
Clearly (ii) implies (i).\ We prove the nontrivial implication (i)$\,\Rightarrow\,$(ii).\ Let $\mathcal O$ denote the set of all open subsets of $\Omega$ and $c(B):=\DR(\eins_B)$ for all $B\in \mathcal O$. Since $\DR$ is submodular with \eqref{eq.replb}, it follows that
$$
\DR(X)=c\big(\{X>0\}\big)\quad\text{for all }X\in \Lb.
$$
The continuity from below of $\DR$ implies that $
c\big(\bigcup_{n\in \N}B_n\big)= \lim_{n\to \infty}c(B_n)$ for all sequences $(B_n)_{n\in \N}\subset \mathcal O$ with $B_n\subset B_{n+1}$ for all $n\in \N$. The statement now follows from \cite[Corollary 2.6]{nendel}.
\end{proof}
\begin{proof}[Proof of Proposition \ref{prop.contabove}]
We first prove the implication (i)$\,\Rightarrow\,$(iii). Let $(X_n)_{n\in \N}\subset \Bb$ with $X_n \searrow 0$ as $n\to \infty$. Using the monotone convergence theorem,
\[
\lim_{n\to \infty}\int X_n\, \d c=\lim_{n\to \infty}\int_0^\infty\DR(X_n-s)\, \d s=0.
\]
Clearly, (iii) implies (ii), and it remains to prove that (ii) implies (i).\ To that end, observe that, for every sequence $(X_n)_{n\in \N}\subset \Bb$ with $X_n \searrow 0$ as $n\to \infty$ and $\varepsilon >0$, $\{X_{n+1}>\varepsilon\}\subset \{X_n>\varepsilon\}$ for all $n\in \N$ and $\bigcap_{n\in \N}\{X_n>\varepsilon\}=\emptyset$. Hence,
\[
\lim_{n\to \infty} \DR\left( X_n- \varepsilon \right)=\lim_{n\to \infty}\DR\big( \eins_{\{X_n> \varepsilon\}} \big)=0.
\]
The representation via (countably additive) probability measures now follows from the standard theory on coherent risk measures, cf.\ \cite{MR3859905}.
\end{proof}
\section{Proofs of Section \ref{sec:distorted}}\label{proofs of sc.6}

\begin{proof}[Proof of Theorem \ref{Theorem: law inv, Dist}]
Clearly, (ii) implies (i). For the other implication, let 
\[
P:=\big\{p\in [0,1]\, \big|\, \exists X\in C\colon \P(X>0)=p\big\} 
\]
Thus, for any $p\in P$, there exists some $X_p\in C$ with $\P\left(X_p>0 \right)=p$ and, by our global assumption on $C$, $I_p:=\eins_{\{X_p>0\}}\in C$. For $p\in [0,1]$, we define $$T(p):=\DR(I_{q_p})\quad\text{with}\quad q_p:=\sup \big([0,p]\cap P\big).$$ As a result $T(p)=\DR(I_p)$ for all $p\in P$. In particular, $T(0)=0$ and $T(1)=1$.\ Let $X\in C$ and $p_X:=\PD_\P(X)=\P(X>0)$. Then, by assumption, 
\[
\DR(X)=\DR\big(\eins_{\{X>0\}}\big)=\DR(I_{p_X})=T(p_X )=T\big(\PD_\P(X)\big).
\]

\end{proof}

\begin{proof}[Proof of Theorem \ref{thm. eq dist. PD}]
In view of Theorem \ref{Theorem: law inv, Dist}, we only have to prove that the distortion function in (ii) is nondecreasing.\ To that end, assume that $\DR(X)=T\big(\PD_\P(X)\big)$ for all $X\in C$ with a distortion function $T\colon [0,1]\to [0,1]$.\ Let $p,q\in P$.\ Since $C$ contains an ordered subset, there exist $X,Y\in C$ with $\PD_\P(X)=p$, $\PD_\P(Y)=q$, and $\{X>0\}\subset \{Y>0\}$. Due to the monotonicity of $\DR$, it follows that $$T(p)=\DR(X)=\DR\big(\eins_{\{X>0\}}\big)\leq \DR\big(\eins_{\{Y>0\}}\big)=\DR(Y)=T(q).$$
The proof is complete.
\end{proof}
\begin{proof}[Proof of Proposition \ref{prop. atomless}]
\JS{First assume that $\Q$ satisfies \eqref{eq.cond.distineq}.} Then,
\[
T(p)\geq \int_0^pq_\Q(1-s)\, \d s= \int_{1-p}^1q_\Q(s)\, \d s\quad\text{for all }p\in [0,1].
\]
Therefore, using \cite[Lemma 4.60]{MR3859905},
\[
T\big(\PD_\P(X)\big)\geq \int_{1-\PD_\P(X)}^1q_\Q(s)\, \d s\geq \PD_\Q(X)\quad\text{for all }X\in \Bb.
\]
On the other hand, if $(\Omega,\FF,\P)$ is atomless and $\Q$ satisfies \eqref{eq.distineq}, it follows from \eqref{eq.rep.distorted} that
\[
T(p)\geq \int_{1-p}^1q_\Q(s)\, \d s=\int_0^pq_\Q(1-s)\, \d s\quad\text{for all }p\in [0,1].
\]
\end{proof}

\end{document}